\def\eqd{\,{\buildrel d \over =}\,}
\def\conind{\,{\buildrel d \over \rightarrow}\,}
\def\coninp{\,{\buildrel p \over \rightarrow}\,}
\newtheorem{thm}{Theorem}
\newtheorem{prop}{Proposition}
\begin{document}


\renewcommand{\baselinestretch}{1.2}

\markright{
\hbox{\footnotesize\rm Statistica Sinica (2012): Preprint}\hfill
}

\markboth{\hfill{\footnotesize\rm ARTIN ARMAGAN, DAVID B. DUNSON AND JAEYONG LEE} \hfill}
{\hfill {\footnotesize\rm GENERALIZED DOUBLE PARETO SHRINKAGE} \hfill}

\renewcommand{\thefootnote}{}
$\ $\par


\fontsize{10.95}{14pt plus.8pt minus .6pt}\selectfont
\vspace{0.8pc}
\centerline{\large\bf GENERALIZED DOUBLE PARETO SHRINKAGE}
\vspace{.4cm}
\centerline{Artin Armagan, David B. Dunson and Jaeyong Lee}
\vspace{.4cm}
\centerline{\it SAS Institute Inc., Duke University and Seoul National University}
\vspace{.55cm}
\fontsize{9}{11.5pt plus.8pt minus .6pt}\selectfont


\begin{quotation}
\noindent {\it Abstract:}
We propose a generalized double Pareto prior for Bayesian shrinkage estimation and inferences in linear models. The prior can be obtained via a scale mixture of Laplace or normal distributions, forming a bridge between the Laplace and Normal-Jeffreys' priors. While it has a spike at zero like the Laplace density, it also has a Student's $t$-like tail behavior. Bayesian computation is straightforward via a simple Gibbs sampling algorithm. We investigate the properties of the maximum a posteriori estimator, as sparse estimation plays an important role in many problems, reveal connections with some well-established regularization procedures, and show some asymptotic results. The performance of the prior is tested through simulations and an application.\par

\vspace{9pt}
\noindent {\it Key words and phrases:}
 Heavy tails, high-dimensional data, $\small{\mbox{LASSO}}$, maximum a posteriori estimation, relevance vector machine, robust prior, shrinkage estimation.
\par
\end{quotation}\par


\fontsize{10.95}{14pt plus.8pt minus .6pt}\selectfont

\setcounter{chapter}{1}
\setcounter{equation}{0} 
\noindent {\bf 1. Introduction}

There has been a great deal of work in shrinkage estimation and simultaneous variable selection in the frequentist framework. The $\small{\mbox{LASSO}}$ of Tibshirani (1996) has drawn much attention to the area, particularly after the introduction of $\small{\mbox{LARS}}$ (Efron et al. (2004)) due to its superb computational performance. 
There is a rich literature analyzing the $\small{\mbox{LASSO}}$ and related approaches (Fu (1998), Knight and Fu (2000), Fan and Li (2001), Yuan and Lin (2005), Zhao and Yu (2006), Zou (2006), Zou and Li (2008)), with a number of articles considering asymptotic properties.  

Bayesian approaches to the same problem became popular with the works of Tipping (2001) and Figueiredo (2003). By expressing Student's $t$ priors for basis coefficients as scale mixtures of normals (West (1987)), and relying on type II maximum likelihood estimation (Berger (1985)), Tipping (2001) developed the relevance vector machine for sparse estimation in kernel regression.  In this setting, however, exact sparsity comes with the price of forfeiting propriety of the posterior by driving the scale parameter of the Student's $t$ distribution toward zero. In fact, driving both the scale parameter and the degrees of freedom to zero yields the so-called Normal-Jeffreys' prior, $\pi(\theta)\propto 1/|\theta|$. The name emerges due to the fact that the hierarchy follows as $\theta \sim \mbox{N}(0,\tau)$, $\pi(\tau)\propto 1/\tau$, where the latter is the Jeffreys' prior on the prior variance of $\theta$. Figueiredo (2003) proposed an expectation-maximization algorithm for maximum a posteriori estimation under Laplace and Normal-Jeffreys' priors, with estimates under the Laplace corresponding to the $\small{\mbox{LASSO}}$.  The Normal-Jeffreys' prior leads to substantially improved performance with finite samples due to the property of strongly shrinking small coefficients to zero while minimally shrinking large coefficients due to the heavy tails; however, it has no meaning from an inferential aspect as it leads to an improper posterior. 

A Bayesian $\small{\mbox{LASSO}}$ was proposed by Park and Casella (2008) and Hans (2009). However, these procedures inherit the problem of over-shrinking large coefficients 
due to the relatively light tails of the Laplace prior. Strawderman-Berger priors (Strawderman (1971), Berger (1980)) have some desirable properties yet lack a simple analytic form. Recently proposed priors have been designed to have high density near zero and heavy tails without the impropriety problem of Normal-Jeffreys.  The horseshoe prior of Carvalho, Polson, and Scott (2009, 2010) is induced through a carefully-specified mixture of normals, 
leading to such desirable properties as an infinite spike at zero and very heavy tails.  They studied sparse shrinkage estimation properties of the horseshoe in a normal means problem.  Griffin and Brown (2007, 2010) proposed an alternative class of hierarchical priors for shrinkage with some similarities to the prior we propose, but it lacks a simple analytic form that facilitates the study of some properties.     

There is a need for alternative shrinkage priors that lead to sparse point estimates if desired, do not over-shrink coefficients that are not close to zero, facilitate straightforward computation even in large $p$ cases, and result in a joint posterior distribution that does a good job of quantifying uncertainty.  We propose the generalized double Pareto prior which independently finds mention in Cevher (2009). It has a simple analytic form, yields a proper posterior, and possesses such appealing properties as a spike at zero, Student's $t$-like tails, and a simple characterization as a scale mixture of normals that leads to a straightforward Gibbs sampler for posterior inferences. We consider both fully Bayesian and frequentist penalized likelihood approaches based on this prior. We show that the induced penalty in the regularization framework yields a consistent thresholding rule having the continuity property in the orthogonal case, with a simple expectation-maximization algorithm described for sparse estimation in non-orthogonal cases. In another independent work motivated by applications to genome wide associations studies, Lee et al. (2011) consider a generalized $t$ prior (McDonald and Newey (1988)) that includes the generalized double Pareto as a special case. Similarities to previous work are limited and our contributions beyond them are (i) the formal introduction of a generalized Pareto density, thresholded and folded at zero, as a shrinkage prior in Bayesian analysis, (ii) the scale mixture representation of the generalized double Pareto in Proposition 1 which is central to our work, (iii) its connection to the Laplace and Normal-Jeffreys' priors as limiting cases in Proposition 2, (iv) the resulting fully conditional posteriors in a linear regression setting along with a simple Gibbs sampling procedure, (v) a detailed discussion on the hyper-parameters $\alpha$ and $\eta$ and their treatment, along with the incorporation of a griddy sampling scheme into the Gibbs sampler, (vi) a detailed analysis of the induced penalty by the generalized double Pareto prior and the properties of the resulting thresholding rule, (vii) an explicit analytic form for the maximum a posteriori estimator in orthogonal cases, (viii) an expectation-maximization procedure to obtain the maximum a posteriori estimate in non-orthogonal cases using the normal mixture representation, (ix) the one-step estimator (Zou and Li (2008)) resulting from the Laplace mixture representation, revealing the connection of the resulting procedure to the adaptive $\small{\mbox{LASSO}}$ of Zou (2006), and (x) the oracle properties of the resulting estimators. 
\par
\vspace{10pt}

\setcounter{chapter}{2}
\setcounter{equation}{0} 
\noindent {\bf 2. Generalized Double Pareto Prior}

The generalized double Pareto density is
\begin{equation}
f(\theta|\xi,\alpha)=\frac{1}{2\xi}\left(1+\frac{|\theta|}{\alpha\xi}\right)^{-(\alpha+1)},
\label{eq:gdP}
\end{equation}
where $\xi>0$ is a scale parameter and $\alpha>0$ is a shape parameter. In contrast to (\ref{eq:gdP}), the generalized Pareto density of Pickands (1975) is parametrized in terms of a location parameter $\mu\in \mathbb{R}$, a scale parameter $\xi>0$, and a shape parameter $\alpha\in \mathbb{R}$ as
\begin{equation}
f(\theta\, |\, \xi, \alpha, \mu) = \frac{1}{\xi} \bigg( 1 + \frac{ \theta - \mu}{\alpha \xi} \bigg)^{-(\alpha+1)},
\label{eq:gP}
\end{equation}
with $\theta \ge \mu$ for $\alpha > 0$ and $\mu \le \theta \le \mu - \xi \alpha$ for $\alpha < 0$. The mean and variance for the generalized Pareto distribution are $\mathbb{E}(\theta)=\mu+\xi/(1-1/\alpha)$ for $\alpha\notin [0,1]$ and $\mathbb{V}(\theta)=\xi^{2}(1-1/\alpha)^{-2}(1-2/\alpha)^{-1}$ for $\alpha\notin [0,2]$. If we let $\mu=0$, (\ref{eq:gP}) becomes an exponential density as $\alpha\rightarrow\infty$ with mean $\xi$ and variance $\xi^{2}$. 

To modify the generalized Pareto density to be a shrinkage prior, we let $\mu = 0$ and reflect the positive part about the origin, assuming $\alpha > 0$, for a density that is symmetric about zero. The mean and variance for the generalized double Pareto distribution are $\mathbb{E}(\theta)=0$ for $\alpha>1$ and $\mathbb{V}(\theta)=2\xi^{2}\alpha^{2}(\alpha-1)^{-1}(\alpha-2)^{-1}$ for $\alpha>2$. The dispersion is controlled by $\xi$ and $\alpha$, with $\alpha$ controlling the tail heaviness and $\alpha=1$ corresponding to Cauchy-like tails and no finite moments.

Figure \ref{fig1} compares the density in (\ref{eq:gdP}) to Cauchy and Laplace densities for the special case $\xi = \alpha = 1$, so that $f(\theta)=1/\{2(1+|\theta|)^2\}$. We refer to this form as the standard double Pareto. Near zero, the standard double Pareto resembles the Laplace density, suggesting similar sparse shrinkage properties of small coefficients in maximum a posteriori estimation. It also has Cauchy-like tails, which is appealing in avoiding over-shrinkage away from the origin. This is illustrated in Figure \ref{fig1}(a). Figure \ref{fig1}(b) illustrates
how the density in (\ref{eq:gdP}) changes for different values of $\xi$ and $\alpha$. 

Prior (\ref{eq:gdP}) can be represented as a scale mixture of normal distributions leading to computational simplifications.  As shorthand notation, let $\theta \sim \small{\mbox{GDP}}( \xi, \alpha )$ denote that $\theta$ has density (\ref{eq:gdP}).  

\begin{prop}
Let $\theta \sim \mbox{N}(0,\tau)$, $\tau \sim \mbox{Exp}(\lambda^{2}/2)$, and $\lambda \sim \mbox{Ga}(\alpha,\eta)$, where $\alpha>0$ and $\eta>0$. The resulting marginal density for $\theta$ is $\small{\mbox{GDP}}(\xi = \eta/\alpha, \alpha)$. 
\end{prop}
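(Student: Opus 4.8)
The plan is to marginalize the two latent scale variables $\tau$ and $\lambda$ in sequence, exploiting the tractable structure at each stage. First I would integrate out $\tau$, conditional on $\lambda$. Since $\theta \mid \tau \sim \mathrm{N}(0,\tau)$ with $\tau \sim \mathrm{Exp}(\lambda^2/2)$, the conditional density of $\theta$ given $\lambda$ is
\[
p(\theta\mid\lambda)=\int_0^\infty \frac{1}{\sqrt{2\pi\tau}}\,e^{-\theta^2/(2\tau)}\,\frac{\lambda^2}{2}\,e^{-\lambda^2\tau/2}\,d\tau.
\]
This is precisely the classical representation of the Laplace density as a normal scale mixture with an exponential mixing distribution on the variance (Andrews and Mallows (1974), West (1987)). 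I would evaluate it either by quoting that result directly or by applying the Gaussian integral identity $\int_0^\infty \tau^{-1/2}e^{-a/\tau-b\tau}\,d\tau=\sqrt{\pi/b}\,e^{-2\sqrt{ab}}$ with $a=\theta^2/2$ and $b=\lambda^2/2$, which yields $p(\theta\mid\lambda)=\tfrac{\lambda}{2}e^{-\lambda|\theta|}$, a Laplace density with rate $\lambda$.

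Second, I would integrate out $\lambda$ against its $\mathrm{Ga}(\alpha,\eta)$ prior, with density $\tfrac{\eta^\alpha}{\Gamma(\alpha)}\lambda^{\alpha-1}e^{-\eta\lambda}$. The marginal becomes
\[
p(\theta)=\frac{\eta^\alpha}{2\Gamma(\alpha)}\int_0^\infty \lambda^{\alpha}\,e^{-(|\theta|+\eta)\lambda}\,d\lambda
=\frac{\eta^\alpha}{2\Gamma(\alpha)}\cdot\frac{\Gamma(\alpha+1)}{(|\theta|+\eta)^{\alpha+1}}.
\]
This is an elementary gamma integral; the only bookkeeping is the identity $\Gamma(\alpha+1)=\alpha\Gamma(\alpha)$, after which the expression collapses to $\alpha\eta^\alpha/\{2(|\theta|+\eta)^{\alpha+1}\}$.

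Finally, I would match this against the target density in (\ref{eq:gdP}) under the claimed reparametrization. Substituting $\xi=\eta/\alpha$, so that $\alpha\xi=\eta$, into (\ref{eq:gdP}) gives $\tfrac{\alpha}{2\eta}(1+|\theta|/\eta)^{-(\alpha+1)}=\alpha\eta^\alpha/\{2(|\theta|+\eta)^{\alpha+1}\}$, which coincides exactly with the marginal computed above, completing the proof.

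The computation is entirely routine once the first integral is recognized, so there is no serious obstacle; the only point requiring care is the first marginalization, where one must correctly track the rate parametrization $\mathrm{Exp}(\lambda^2/2)$ (a rate of $\lambda^2/2$, not a mean) so as to land on the Laplace rate $\lambda$ rather than a rescaled version. Getting that constant right is what ensures the final parameter identification $\xi=\eta/\alpha$ rather than some other constant.
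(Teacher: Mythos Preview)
Your proof is correct and follows the same two-stage marginalization the paper has in mind: integrate out $\tau$ to obtain the Laplace density $\tfrac{\lambda}{2}e^{-\lambda|\theta|}$ (the paper alludes to this step in Section~4.4.2 and in the proof of Proposition~2), then integrate out $\lambda$ against the $\mathrm{Ga}(\alpha,\eta)$ prior via a gamma integral and match to (\ref{eq:gdP}) with $\xi=\eta/\alpha$. The algebra and constants are all tracked correctly.
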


Proposition 1 reveals a relationship between the prior in (\ref{eq:gdP}) and the prior of Griffin and Brown (2007), with the difference being that Griffin and Brown (2007) place a mixing distribution on $\lambda^2$ leading to a marginal density on $\theta$ with no simple analytic form. 

In Proposition 2 we show that the prior in (\ref{eq:gdP}) forms a bridge between two limiting cases -- Laplace and Normal-Jeffreys' priors. 

\begin{prop}
Given the representation in Proposition 1, $\theta \sim \small{\mbox{GDP}}(\xi = \eta/\alpha, \alpha)$ implies 
\begin{enumerate}
\item $f(\theta)\propto 1/|\theta|$ for $\alpha=0$ and $\eta=0$,
\item $f(\theta|\lambda')=(\lambda'/2)\exp{(-\lambda'|\theta|)}$ for $\alpha\rightarrow\infty$, $\alpha/\eta = \lambda'$ and $0<\lambda'<\infty$.
\end{enumerate}
\end{prop}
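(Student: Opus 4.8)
The plan is to work directly from the closed-form marginal supplied by Proposition 1. Substituting $\xi = \eta/\alpha$ into (\ref{eq:gdP}) and using the cancellation $\alpha\xi = \eta$ gives the explicit density
\begin{equation}
f(\theta) = \frac{\alpha}{2\eta}\left(1 + \frac{|\theta|}{\eta}\right)^{-(\alpha+1)},
\label{eq:gdp-marg}
\end{equation}
and both claims are then read off from (\ref{eq:gdp-marg}) by taking the indicated limits of the hyper-parameters. I would dispose of the Laplace case first, since it is the cleaner of the two. Reparametrizing by $\eta = \alpha/\lambda'$ with $\lambda'$ held fixed, the leading factor becomes $\alpha/(2\eta) = \lambda'/2$, while the bracketed term is $(1 + \lambda'|\theta|/\alpha)^{-(\alpha+1)}$. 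Letting $\alpha \to \infty$ and invoking the elementary limit $(1 + x/\alpha)^{\alpha} \to e^{x}$, while the residual factor $(1 + \lambda'|\theta|/\alpha)^{-1} \to 1$, collapses (\ref{eq:gdp-marg}) to $(\lambda'/2)\exp(-\lambda'|\theta|)$, which is exactly part (ii).

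For part (i) I would treat $\eta \to 0$ and $\alpha \to 0$ in turn, keeping careful track of what depends on $\theta$. Writing $1 + |\theta|/\eta = (\eta + |\theta|)/\eta$ rearranges (\ref{eq:gdp-marg}) into $\tfrac{1}{2}\alpha\eta^{\alpha}(\eta + |\theta|)^{-(\alpha+1)}$, so that as $\eta \to 0$ the $\theta$-dependence is $|\theta|^{-(\alpha+1)}$ up to a constant not involving $\theta$; sending $\alpha \to 0$ then leaves $f(\theta) \propto |\theta|^{-1}$, the Normal-Jeffreys' prior. A slightly more transparent route, which also ties back to the hierarchy named in the Introduction, is to take the limit inside the representation of Proposition 1: as $\alpha,\eta \to 0$ the density of $\lambda \sim \mbox{Ga}(\alpha,\eta)$ tends, up to proportionality, to the improper $\pi(\lambda) \propto 1/\lambda$, and integrating $\tau \mid \lambda \sim \mbox{Exp}(\lambda^{2}/2)$ against it gives $\pi(\tau) \propto 1/\tau$ after the substitution $u = \lambda^{2}/2$; the standard scale mixture $\theta \sim \mbox{N}(0,\tau)$ with $\pi(\tau)\propto 1/\tau$ then yields $\pi(\theta) \propto 1/|\theta|$.

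The main point requiring care is not a deep obstacle but the bookkeeping in part (i): the limiting priors are improper, so the statements can only hold up to a $\theta$-independent normalizing constant, and one should be explicit that the double limit is taken componentwise. The prefactor $\alpha\eta^{\alpha}/2$ tends to zero, which is precisely the signature of the lost propriety, so the proportionality framing is essential and the notation ``$\alpha=0$ and $\eta=0$'' must be interpreted as a limit. Part (ii), by contrast, is an honest convergence of proper densities and needs only the exponential limit. I would therefore present (\ref{eq:gdp-marg}) once and then handle (ii) and (i) as the two short computations above.
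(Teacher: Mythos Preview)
Your argument is correct, but your primary route differs from the paper's. The paper works entirely inside the hierarchical representation of Proposition~1: for part~(i) it observes that $\alpha=\eta=0$ turns the $\mbox{Ga}(\alpha,\eta)$ prior on $\lambda$ into the Jeffreys' prior $\pi(\lambda)\propto 1/\lambda$, integrates out $\lambda$ to get $\pi(\tau)\propto 1/\tau$, and invokes the Normal--Jeffreys' hierarchy; for part~(ii) it argues that the $\mbox{Ga}(\alpha,\eta)$ distribution collapses to a Dirac mass at $\lambda'$ (mean $\to\lambda'$, variance $\to 0$), so integrating the Laplace against $\delta(\lambda-\lambda')$ returns the Laplace density. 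You instead substitute $\xi=\eta/\alpha$ into the closed-form marginal and take the limits there, using the elementary $(1+x/\alpha)^{\alpha}\to e^{x}$ for~(ii) and tracking the $\theta$-dependence of $\alpha\eta^{\alpha}(\eta+|\theta|)^{-(\alpha+1)}$ for~(i). Your approach is arguably cleaner for~(ii), since it yields honest pointwise convergence of proper densities without the informal Dirac-delta step; the paper's approach, on the other hand, makes the bridge to the named priors (Normal--Jeffreys', Laplace) more transparent by showing how each limiting case arises at the level of the mixing distribution. Your ``slightly more transparent route'' for~(i) is in fact exactly the paper's proof of that part.
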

\begin{proof}
For the first item, setting $\alpha=\eta=0$ implies placing a Jeffreys' prior on $\lambda$, $\pi(\lambda)\propto 1/\lambda$. Integration over $\lambda$ yields $\pi(\tau)\propto 1/\tau$, which implies the Normal-Jeffreys' prior on $\theta$. For the second item, notice that $\pi(\lambda)=\delta(\lambda-\lambda')$, where $\delta(.)$ denotes the Dirac delta function, since $\lim_{\alpha\rightarrow\infty}\lim_{\alpha/\eta\rightarrow \lambda'}{\mathbb{E}}(\lambda)=\lambda'$ and $\lim_{\alpha\rightarrow\infty}\lim_{\alpha/\eta\rightarrow \lambda'}\mathbb{V}(\lambda)=0$. Thus, $\int_{0}^{\infty}(\lambda/2)\exp{(-\lambda|\theta|)}\delta(d\lambda)=(\lambda'/2)\exp{(-\lambda'|\theta|)}$.
\end{proof}

\begin{figure}[!t]
\centering \subfigure[]{
\begin{minipage}{.49\linewidth}
 \centering\includegraphics[width=1\textwidth]{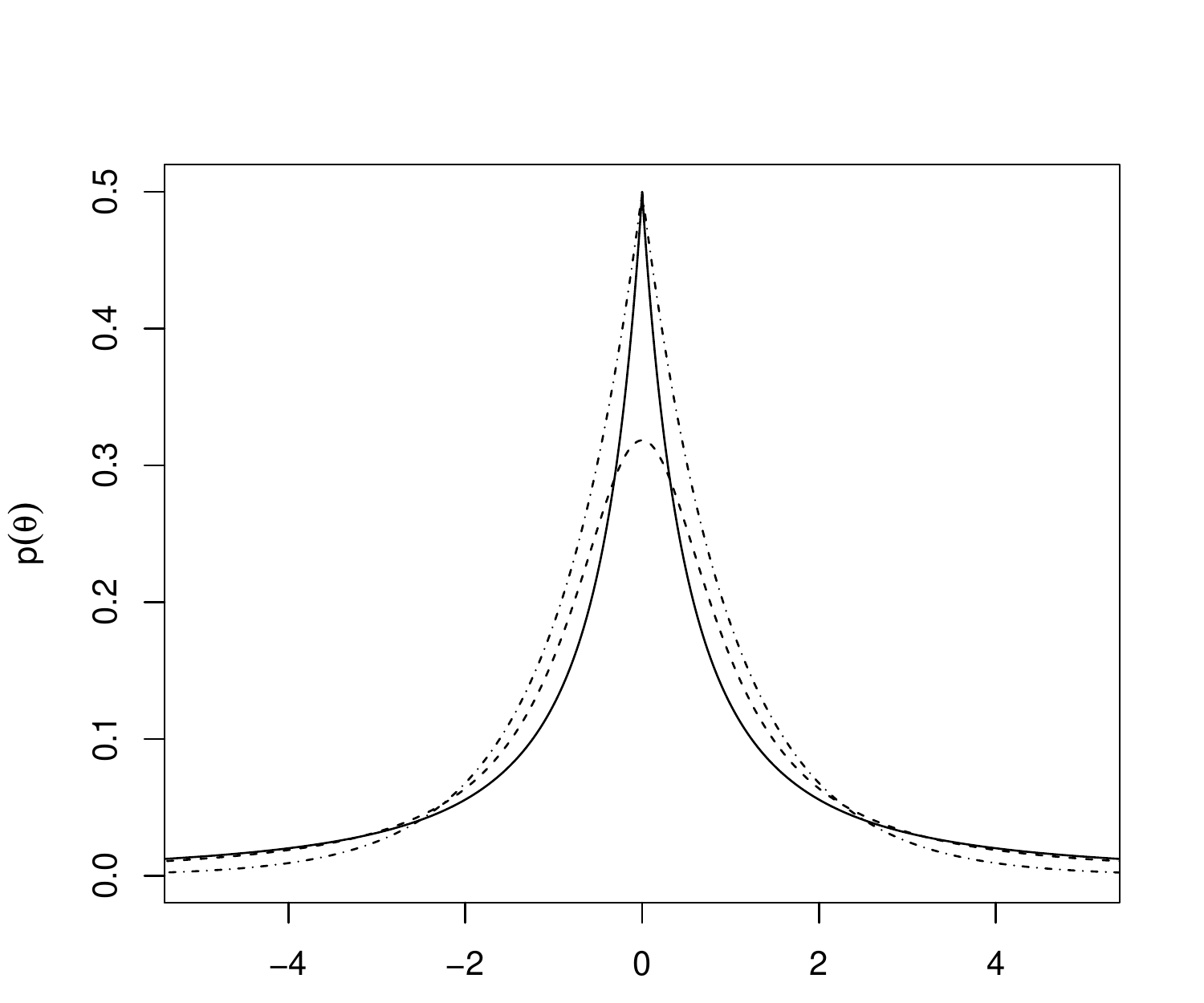} \\ \centering\includegraphics[width=1\textwidth]{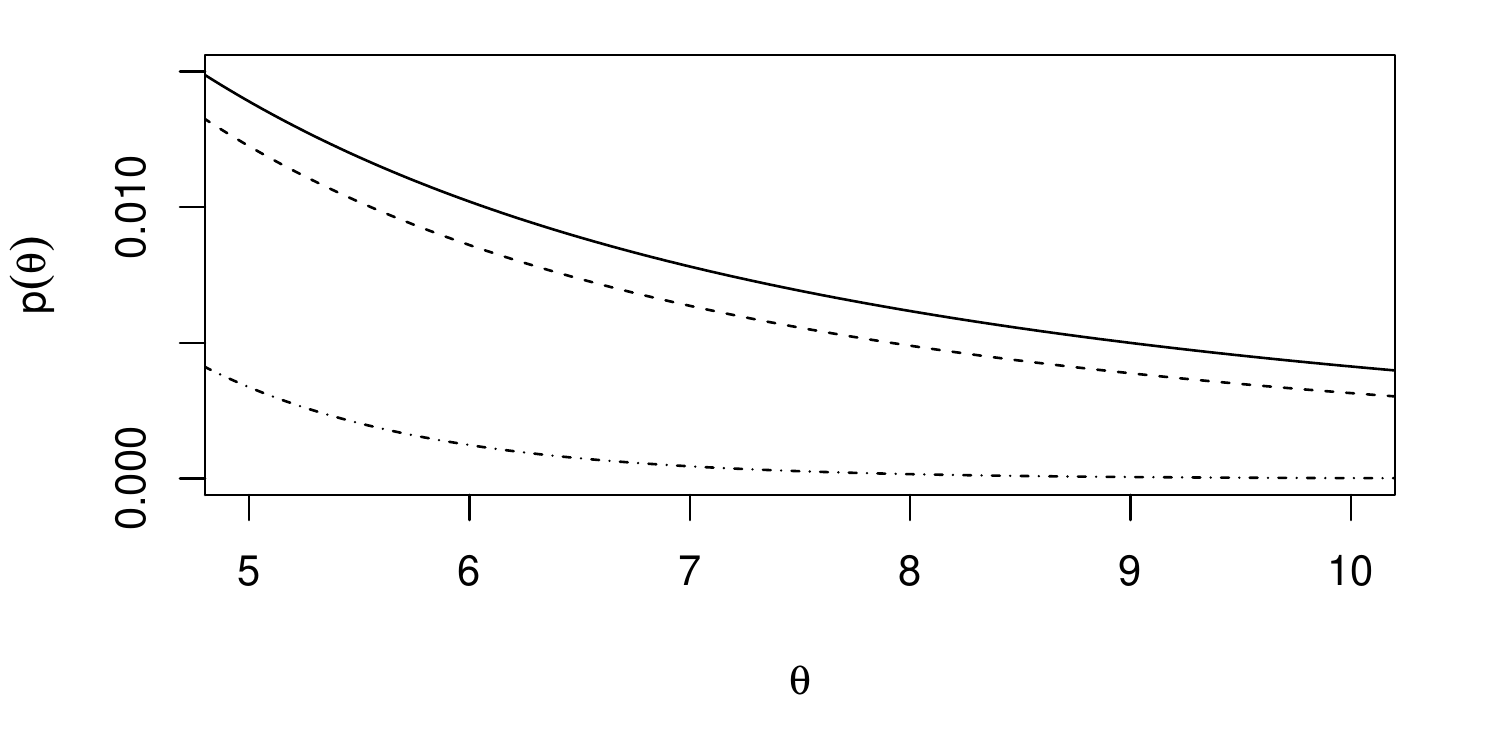}
\end{minipage}}
\centering \subfigure[]{
\begin{minipage}{.49\linewidth}
   \centering\includegraphics[width=1\textwidth]{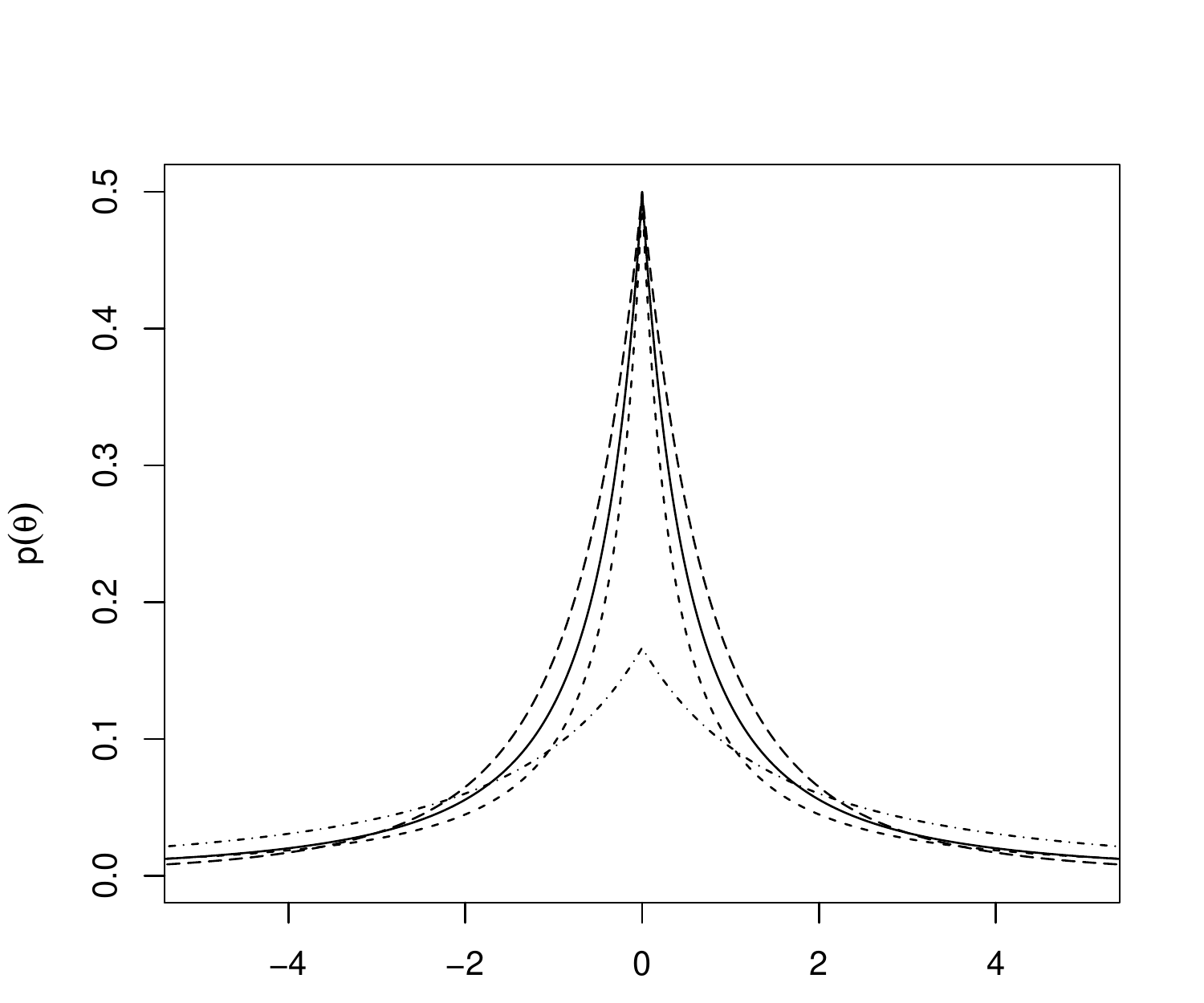} \\ \centering\includegraphics[width=1\textwidth]{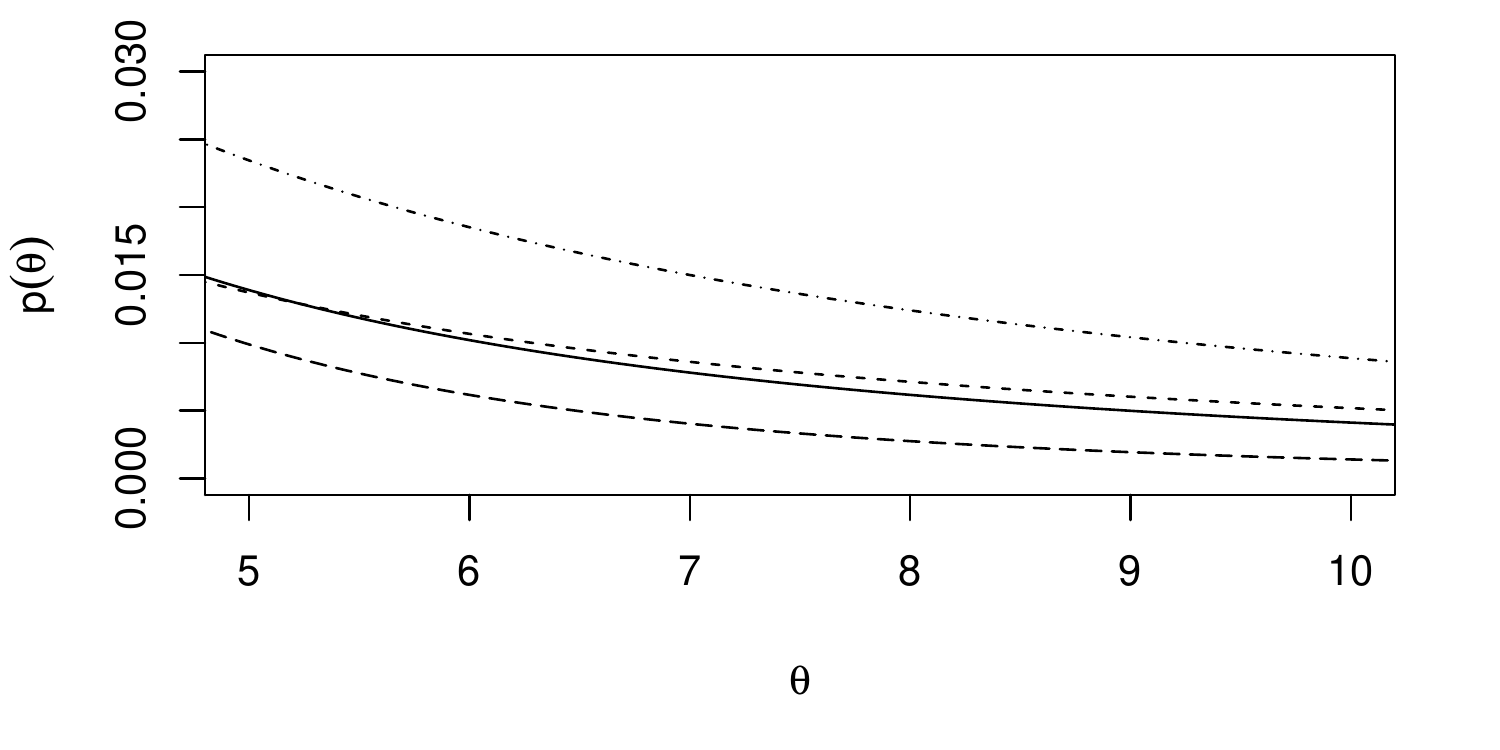}
\end{minipage}}
\caption{(a) Probability density functions for standard double Pareto (solid line), standard Cauchy (dashed line) and Laplace (dot-dash line) ($\lambda=1$) distributions. (b) Probability density functions for the generalized double Pareto with $(\xi,\alpha)$ values of $(1,1)$ (solid line), $(0.5,1)$ (dashed line), $(1,3)$ (long-dashed line), and $(3,1)$ (dot-dash line).\label{fig1}}
\end{figure}

As noted in Polson and Scott (2010), if $\pi(\tau)$ has exponential or lighter tails, observations are shrunk towards zero by some non-diminishing amount, regardless of size. This phenomenon is well-understood and commonly observed in estimation under the Laplace prior, where an exponential density mixes a normal density. The higher-level mixing (over $\lambda$) in Proposition 1 allows $\pi(\tau)$ to have heavier tails, remedying the unwanted bias. 

As $\alpha$ grows, the density becomes lighter tailed, more peaked and the variance becomes smaller, while as $\eta$ grows, the density becomes flatter and the variance increases. Hence if we increase $\alpha$, we may cause unwanted bias for large signals, though causing stronger shrinkage for noise-like signals; if we increase $\eta$ we may lose the ability to shrink noise-like signals, as the density is not as pronounced around zero; and finally, if we increase $\alpha$ and $\eta$ at the same rate, the variance remains constant but the tails become lighter, converging to a Laplace density in the limit. This leads to over-shrinking of coefficients that are away from zero. As a typical default specification for the hyper-parameters, one can take $\alpha = \eta = 1$. This choice leads to Cauchy-like tail behavior, which is well-known to have desirable Bayesian robustness properties. 

To motivate this default choice, we assess the behavior of the prior shrinkage factor $\kappa = 1/(1+\tau) \in (0,1)$, where $\theta \sim \mathrm{N}(0,\tau)$ is the parameter of interest (Carvalho et al. (2010)). As $\kappa\rightarrow 0$, the prior imposes no shrinkage, while as $\kappa\rightarrow 1$ it has a strong pull towards zero. The generalized double Pareto distribution implies a prior $\pi(\kappa)$ on $\kappa$ upon integration over $\lambda$ in Proposition 1. 
For the standard double Pareto, this is
\begin{equation}
\pi(\kappa) = \frac{1}{2(1-\kappa)^2}\left[\frac{\sqrt{\pi}\exp\left\{\frac{\kappa}{2(1-\kappa)}\right\}\mbox{Erfc}\left\{\sqrt{\frac{\kappa}{2(1-\kappa)}}\right\}}{\sqrt{2\kappa(1-\kappa)}}-1\right],\nonumber
\end{equation}
where $\mbox{Erfc}(.)$ denotes the complementary error function. 
In Figure 2.2, we compare $\pi(\kappa)$ under the standard double Pareto, Strawderman-Berger, horseshoe, and Cauchy priors, which may all be considered default choices. The priors behave similarly for $\kappa \approx 0$, implying similar tail behavior. The behavior of $\pi(\kappa)$ for $\kappa \approx 1$ governs the strength of shrinkage of small signals.  As $\kappa \to 1$, $\pi(\kappa)$ tends towards zero for the Cauchy, implying weak shrinkage, while $\pi(\kappa)$ is unbounded for the horseshoe, suggesting a strong pull towards zero for small signals.    The Strawderman-Berger and standard double Pareto priors are a compromise between these extremes, with $\pi(\kappa)$ bounded for $\kappa \to 1$ in both cases.  The standard double Pareto assigns higher density close to one than the Strawderman-Berger prior, and has the advantage of a simple analytic form over the Strawderman-Berger and horseshoe priors.
\begin{figure}[!t]
 \centering\includegraphics[width=1\textwidth]{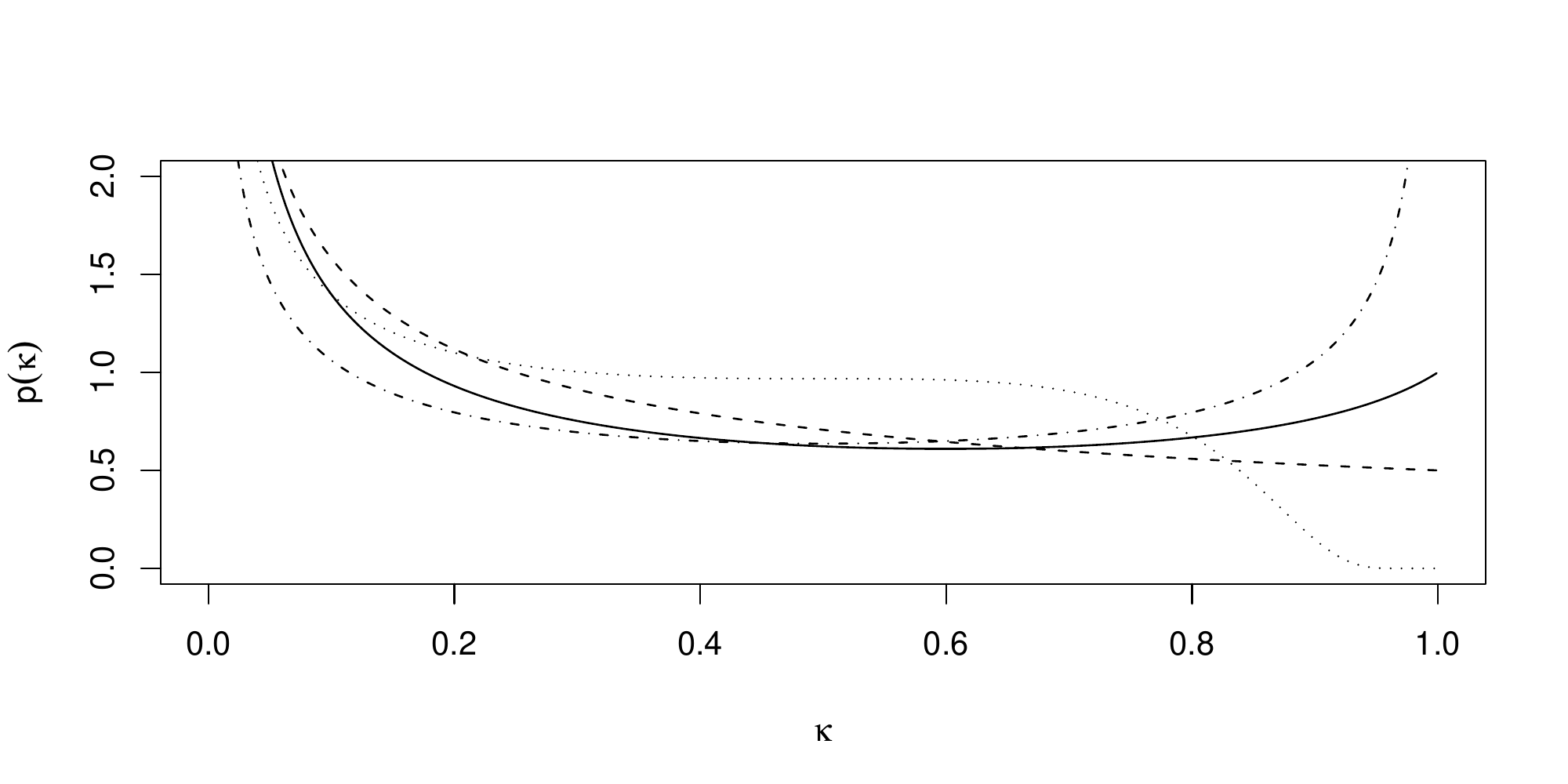}
\caption{Prior density of $\kappa$ implied by the standard double Pareto prior (solid line), Strawderman--Berger prior (dashed line), horseshoe prior (dot-dash line) and standard Cauchy prior (dotted line). \label{fig_shrink}}
\end{figure}

\begin{figure}[!t]
\centering \subfigure[]{
\begin{minipage}{.49\linewidth}
 \centering\includegraphics[width=1\textwidth]{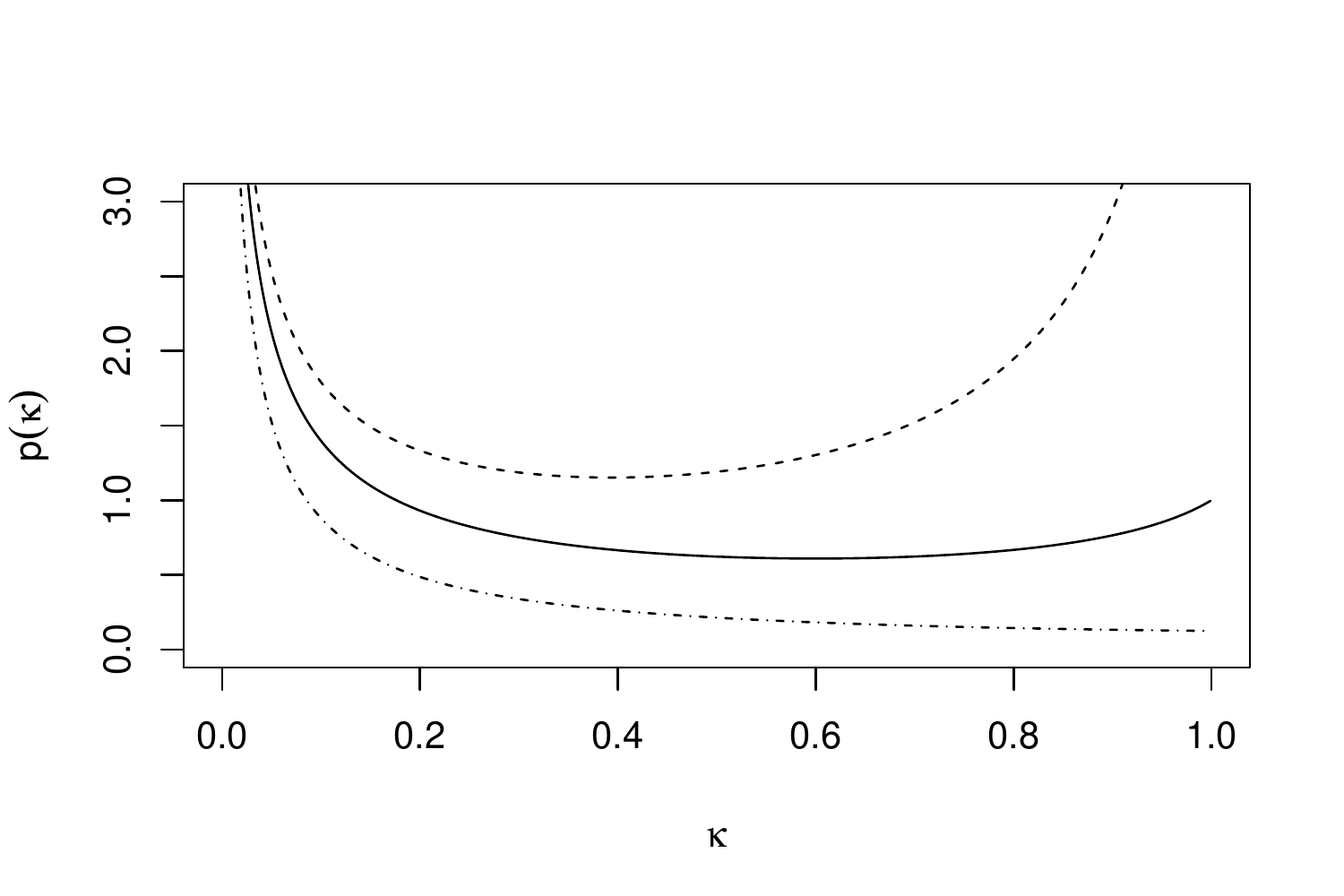}
\end{minipage}}
\centering \subfigure[]{
\begin{minipage}{.49\linewidth}
   \centering\includegraphics[width=1\textwidth]{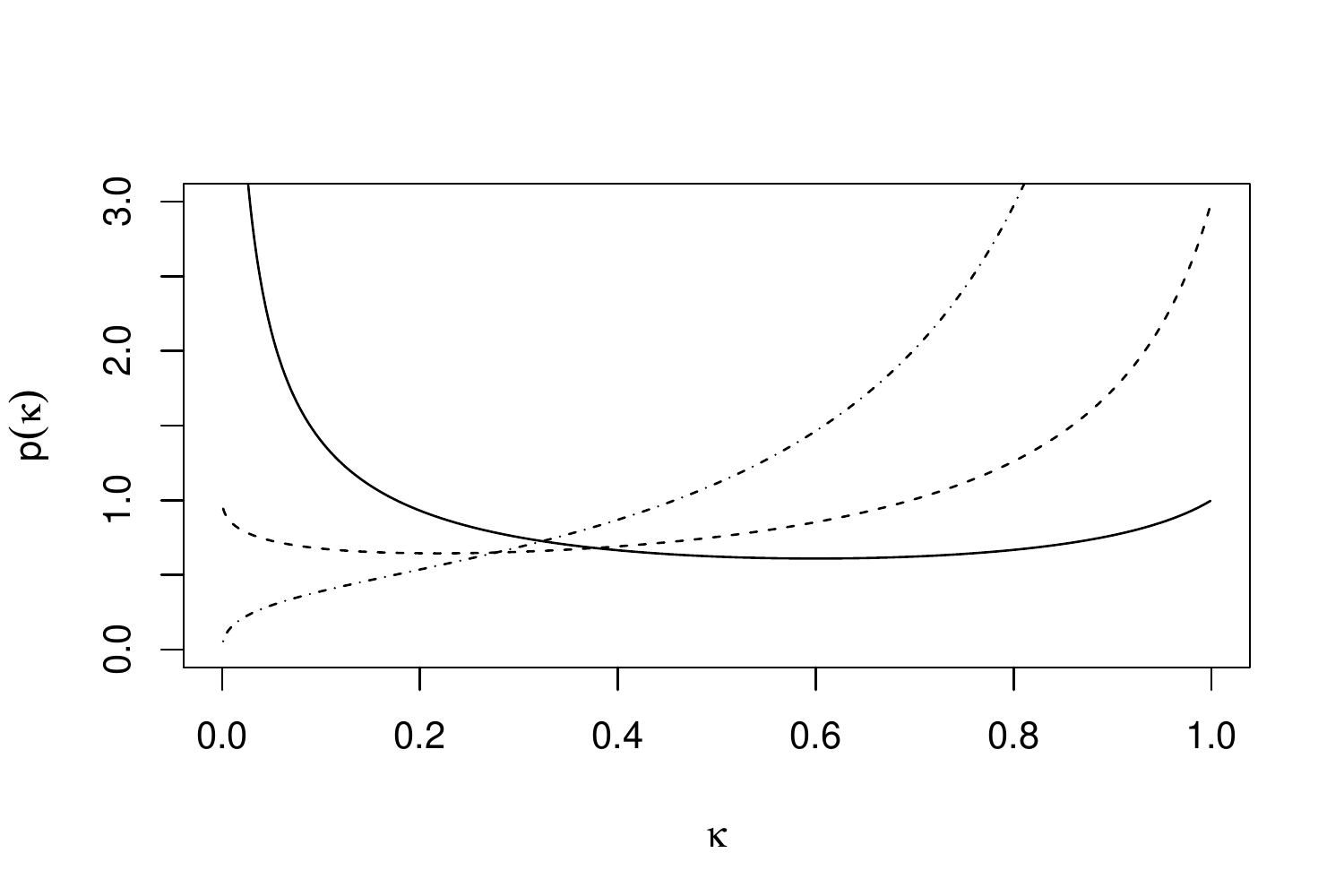}
\end{minipage}}
\caption{Prior density of $\kappa$ (a) when $\alpha=1$ and $\eta=0.5$ (dashed), $\eta=1$ (solid), $\eta=2$ (dot-dash) (b) when $\eta=1$ and $\alpha=1$ (solid), $\alpha=2$ (dashed), $\alpha=3$ (dot-dash).\label{fig_shrink2}}
\end{figure}

Of course it is best to adjust $\alpha$ and $\eta$ according to any available prior information pertaining to the sparsity structure of the estimated vector. For general $\alpha>0$ and $\eta>0$ values, the prior on $\kappa$ is
\begin{eqnarray}
\pi(\kappa|\alpha,\eta) &=& \frac{2^{\alpha/2-1}\eta^{\alpha}\kappa^{(\alpha-1)/2}(1-\kappa)^{-(\alpha+3)/2}}{\Gamma(\alpha)}\nonumber\\
& &\times\left\{\left(\frac{1}{\kappa}-1\right)^{1/2}\Gamma\left(\frac{\alpha}{2}+1\right){}_1\mbox{F}_{1}\left(\frac{\alpha}{2}+1,\frac{1}{2},\frac{\eta^2\kappa}{2(1-\kappa)}\right)\right.\nonumber \\
& &\left.-\sqrt{2}\eta\Gamma\left(\frac{\alpha+3}{2}\right){}_1\mbox{F}_{1}\left(\frac{\alpha+3}{2},\frac{3}{2},\frac{\eta^2\kappa}{2(1-\kappa)}\right)\right\},
\label{kappa}
\end{eqnarray}
where ${}_1\mbox{F}_1$ denotes the confluent hypergeometric function. Note that $\pi(\kappa|\alpha,\eta)$ takes a ``horseshoe'' shape when $\alpha=\eta=0$. Carvalho, Polson, and Scott (2010) show that $\pi(\kappa)\propto \kappa^{-1}(1-\kappa)^{-1}$ implies a Normal-Jeffreys' prior on $\theta$, which can also be observed by setting $\alpha=\eta=0$ in (\ref{kappa}) in conjunction with Proposition 1. Hence $\pi(\kappa|\alpha,\eta)$ is unbounded at $\kappa=1$ forcing $\pi(\theta|\alpha,\eta)$ to be unbounded at $0$ only if $\eta=0$. The effects of $\alpha$ and $\eta$ are now observed with better clarity from Figure \ref{fig_shrink2}. As $\eta$ increases, less and less density is assigned to the neighborhood of $\kappa\approx1$, repressing shrinkage. On the other hand, increasing $\alpha$ values place more and more density in the neighborhood of $\kappa\approx1$ promoting further shrinkage. This notion is later reinforced by Proposition 3, such that the prior induces a thresholding rule under maximum a posteriori estimation if $\eta<2\sqrt{\alpha+1}$. Hence, we need to carefully pick these hyper-parameters, in particular $\alpha$, as there is a trade-off between the magnitude of shrinkage and tail robustness.
\par
\vspace{10pt}

\setcounter{chapter}{3}
\setcounter{equation}{0} 
\noindent {\bf 3. Bayesian Inference in Linear Models}

Consider the linear regression model $\mathbf{y}=\mathbf{X}{\boldsymbol\beta}+{\boldsymbol\epsilon}$, where $\mathbf{y}$ is an $n$-dimensional vector of responses,
$\mathbf{X}$ is the $n\times p$ design matrix and $\boldsymbol\epsilon \sim \mbox{N}\left({0},\sigma^{2}\mathbf{I}_{n}\right)$.  Letting 
$\beta_j|\sigma \sim \small{\mbox{GDP}}(\xi = \sigma \eta/\alpha,  \alpha)$ independently for $j=1,\ldots, p$, 
\begin{equation}
\pi(\boldsymbol\beta|\sigma)=\prod_{j=1}^{p}\frac{1}{2\sigma\eta/\alpha}\left(1+\frac{1}{\alpha}\frac{|\beta_{j}|}{\sigma \eta/\alpha}\right)^{-(\alpha+1)}.\label{eq:gdP2}
\end{equation}
From Proposition 1, this prior is equivalent to $\beta_{j}|\sigma \sim \mathrm{N}(0,\sigma^2\tau_{j})$, with $\tau_j \sim \mbox{Exp}(\lambda_{j}^2/2)$ and 
$\lambda_j \sim \mbox{Ga}(\alpha,\eta)$.  We place the Jeffreys' prior on the error variance, $\pi(\sigma)\propto 1/\sigma$. 

Using the scale mixture of normals representation, we obtain a simple data augmentation Gibbs sampler having the conditional posteriors $(\boldsymbol\beta|\sigma^2,\mathbf{T},\mathbf{y})\sim \mathrm{N}\{(\mathbf{X}'\mathbf{X}+\mathbf{T}^{-1})^{-1}\mathbf{X}'\mathbf{y},\sigma^{2}\left(\mathbf{X}'\mathbf{X}+\mathbf{T}^{-1}\right)^{-1}\}$, $(\sigma^{2}|\boldsymbol\beta,\mathbf{T},\mathbf{y})\sim \mbox{IG}\{(n+p)/2,(\mathbf{y}-\mathbf{X}\boldsymbol\beta)'(\mathbf{y}-\mathbf{X}\boldsymbol\beta)/2+\boldsymbol\beta'\mathbf{T}^{-1}\boldsymbol\beta/2\}$, $(\lambda_{j}|\beta_{j},\sigma^{2})\sim \mbox{Ga}(\alpha+1,|\beta_{j}|/\sigma+\eta)$, $(\tau_{j}^{-1}|\beta_{j},\lambda_{j},\sigma^{2})\sim \mbox{Inv-Gauss}\{\mu=(\lambda_{j}^{2}\sigma^{2}/\beta_{j}^{2})^{1/2},\rho=\lambda^{2}\}$, where $\mathbf{T}=\mbox{diag}(\tau_{1},\ldots,\tau_{p})$ and $\mbox{Inv-Gauss}$ denotes the inverse Gaussian distribution with location and scale parameters $\mu$ and $\rho$. In our experience, this Gibbs sampler is efficient with fast rates of convergence and mixing. 

In the absence of any prior information on $\alpha$ and $\eta$, one may either set them to their default values or, as an alternative, choose hyper-priors to allow the data to inform about the values of $\alpha$ and $\eta$. We use $\pi(\alpha)=1/(1+\alpha)^2$ and $\pi(\eta)=1/(1+\eta)^2$ to correspond to generalized Pareto hyper-priors with location parameter $0$, scale parameter $1$ and shape parameter $1$. The median value of the resulting distribution for $\alpha$ and $\eta$ is $1$, centered at the default choices suggested earlier, while the mean and variance do not exist. 

For sampling purposes, let $a=1/(1+\alpha)$ and $e=1/(1+\eta)$. These transformations suggest a uniform prior on $a$ and $e$ in $(0,1)$ given the generalized Pareto priors on $\alpha$ and $\eta$. Consequently, the conditional posteriors for $a$ and $e$ are  
\begin{eqnarray} 
\pi(a|\boldsymbol\beta,\eta)&\propto& \left(\frac{1-a}{a}\right)^{p}\prod_{j=1}^{p}\left(1+\frac{|\beta_{j}|}{\sigma\eta}\right)^{-1/a}, \nonumber \\
\pi(e|\boldsymbol\beta,\alpha)&\propto& \left(\frac{e}{1-e}\right)^{p}\prod_{j=1}^{p}\left\{1+e\frac{|\beta_{j}|}{\sigma(1-e)}\right\}^{-(\alpha+1)}.\nonumber
\end{eqnarray}
We propose the embedded griddy Gibbs (Ritter and Tanner (1992)) sampling scheme:
\begin{enumerate}
\item[i.] Form a grid of $m$ points $a^{(1)},\ldots,a^{(m)}$ in the interval $(0,1)$.
\item[ii.] Calculate $w^{(k)}=\pi(a^{(k)}|\boldsymbol\beta,\eta)$.
\item[iii.] Normalize the weights, $w_{N}^{(k)}=w^{(k)}/\sum_{k=1}^{m}w^{(k)}$.
\item[iv.] Draw a sample from the set $\{a^{(1)},\ldots,a^{(m)}\}$ with probabilities $\{w_{N}^{(1)},\ldots,w_{N}^{(m)}\}$, and set $\alpha=1/a-1$ to be used at the current iteration of the Gibbs sampler.
\end{enumerate}
Repeat the same procedure for $e$ and obtain a random draw for $\eta$. We also experiment with fixing $\eta$ as $1$ while treating $\alpha$ as unknown. In this case, the prior variance of $\boldsymbol\beta|\sigma^{2}$ is determined by $\alpha$. 

In what follows we establish the ties between the Bayesian approach we have taken and some frequentist regularization approaches. The simple analytic structure of the generalized double Pareto prior facilitates analyses while its hierarchical formulation leads to straight-forward computation. 
\par
\vspace{10pt}

\setcounter{chapter}{4}
\setcounter{equation}{0} 
\noindent {\bf 4. Sparse Maximum a Posteriori Estimation}

The generalized double Pareto distribution can be used not only as a prior in a Bayesian analysis, but also to induce a sparsity-favoring penalty in regularized least squares:
\begin{equation}
\tilde{\boldsymbol\beta}=\mbox{arg}\min_{\boldsymbol\beta}\left\{\frac{1}{2\sigma^2}\|\mathbf{y}-\mathbf{X}\boldsymbol\beta\|^{2}+\sum_{j=1}^{p}p(|\beta_{j}|)\right\},
\label{pls}
\end{equation}
where $\mathbf{X}$ is initially assumed to have orthonormal columns and $p(.)$ denotes the penalty function implied by the prior on the regression coefficients. Following Fan and Li (2001), let $\hat{\boldsymbol\beta}=\mathbf{X}'\mathbf{y}$, and denote the minimization problem in (\ref{pls}) for a component of $\boldsymbol\beta$ as 
\begin{equation}
\tilde{\beta_{j}}=\mbox{arg}\min_{\beta_{j}}\left\{\frac{1}{2}\left(\hat{\beta}_{j}-\beta_{j}\right)^{2}+\sigma^{2}p(|\beta_{j}|)\right\},
\label{pls_one}
\end{equation}  
with the penalty function $p(|\beta_{j}|)=(\alpha+1)\log\left(\sigma\eta+|\beta_{j}|\right)$ that simply retains the term in $-\log \pi(\beta_j|\alpha,\eta)$ that depends on $\beta_j$.

From Fan and Li (2001), a good penalty function should result in an estimator that is (i) nearly unbiased when the true unknown parameter is large, (ii) a thresholding rule that automatically sets small estimated coefficients to zero to reduce model complexity, and (iii) continuous in data ($\hat{\beta}_j$) to avoid instability in model prediction.  In the following, we show that the penalty function induced by prior (\ref{eq:gdP2}) may achieve these properties.
\par
\vspace{10pt}

\noindent {\bf 4.1. Near-unbiasedness}

The first order derivative of (\ref{pls_one}) with respect to $\beta_{j}$ is $\mbox{sgn}(\beta_{j})\{|\beta_{j}|+\sigma^{2}p'(|\beta_{j}|)\}-\hat{\beta}_{j}=\mbox{sgn}(\beta_{j})\{|\beta_{j}|+\sigma^{2}(\alpha+1)/(\sigma\eta+|\beta_{j}|)\}-\hat{\beta}_{j}$, where $p'(|\beta_{j}|)=\partial p(|\beta_{j}|)/\partial |\beta_{j}|$ is the term causing bias in estimation. Although it is appealing to introduce bias in small coefficients to reduce the mean squared error and model complexity, it is also desirable to limit the shrinkage of large coefficients with $p'(|\beta_{j}|)\rightarrow 0$ as $|\beta_{j}|\rightarrow \infty$. In addition, it is desirable for $p'(|\beta_{j}|)$ to approach zero rapidly, implying shrinkage, and the associated introduction of bias rapidly decreases as coefficients get further away from zero. In fact, the rate of convergence of $p'(|\beta_{j}|)$ to zero is of the same order under the generalized double Pareto and Normal-Jeffreys' priors, with $\lim_{|\beta_{j}|\rightarrow \infty} \{(\alpha+1)/(\sigma\eta+|\beta_{j}|)\}/\{1/|\beta_{j}|\}=\alpha+1$. As $\alpha$ controls the tail heaviness in the generalized double Pareto prior, with lighter tails for larger values of $\alpha$, convergence of the ratio to $(\alpha+1)$ is intuitive. In the case of $\small{\mbox{LASSO}}$, the bias, $p'(|\beta_{j}|)$, remains constant regardless of $|\beta_{j}|$, which can also be observed in Figure \ref{fig3}(b).
\par
\vspace{10pt}

\noindent {\bf 4.2. Sparsity}

As noted in Fan and Li (2001), a sufficient condition for the resulting estimator to be a thresholding rule is that the minimum of the function $|\beta_{j}|+\sigma^{2}p'(|\beta_{j}|)$ is positive. 
\begin{prop}
Under the formulation in Proposition 1, prior (\ref{eq:gdP2}) implies a penalty yielding an estimator that is a thresholding rule if $\eta<2\sqrt{\alpha+1}$.
\end{prop}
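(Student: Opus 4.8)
The plan is to verify directly the Fan--Li sufficient condition quoted just above the statement, namely that the function $g(u) := u + \sigma^2 p'(u)$ stays strictly positive on its domain $u = |\beta_j| \ge 0$. Since the induced penalty is $p(|\beta_j|) = (\alpha+1)\log(\sigma\eta + |\beta_j|)$, its derivative is $p'(u) = (\alpha+1)/(\sigma\eta + u)$, so the object to analyze is
\[
g(u) = u + \frac{\sigma^2(\alpha+1)}{\sigma\eta + u}, \qquad u \ge 0 .
\]
First I would reduce this to a one-line optimization. The only genuinely delicate point is handling the constraint $u \ge 0$, since the unconstrained minimizer of $g$ need not be feasible; I would sidestep the resulting casework by the substitution $v = \sigma\eta + u$, which ranges over $v \ge \sigma\eta$, hence in particular over the larger set $v > 0$. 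In these variables $g = v + \sigma^2(\alpha+1)/v - \sigma\eta$.

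Next I would bound the term $v + \sigma^2(\alpha+1)/v$ by the AM--GM inequality (equivalently, minimize this convex function of $v$ by calculus), giving
\[
v + \frac{\sigma^2(\alpha+1)}{v} \ \ge\ 2\sqrt{\sigma^2(\alpha+1)} \ =\ 2\sigma\sqrt{\alpha+1},
\]
with equality at $v = \sigma\sqrt{\alpha+1}$. Consequently $g \ge 2\sigma\sqrt{\alpha+1} - \sigma\eta = \sigma\,(2\sqrt{\alpha+1} - \eta)$ for every $v > 0$, and a fortiori for every feasible $u \ge 0$, since restricting to a smaller set can only raise the infimum. Thus $\min_{u \ge 0} g(u) \ge \sigma\,(2\sqrt{\alpha+1} - \eta)$.

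Finally I would close the argument: if $\eta < 2\sqrt{\alpha+1}$, then $\sigma\,(2\sqrt{\alpha+1} - \eta) > 0$, so the minimum of $|\beta_j| + \sigma^2 p'(|\beta_j|)$ is strictly positive, and the Fan--Li criterion then yields that the penalized-least-squares estimator is a thresholding rule, which is exactly the claim. The hard part is really only the domain bookkeeping just described, and the $v$-substitution disposes of it. I would also remark, for completeness, that the bound is attained precisely when $\eta \le \sqrt{\alpha+1}$ (so that the optimal $v = \sigma\sqrt{\alpha+1}$ corresponds to a feasible $u = \sigma(\sqrt{\alpha+1}-\eta) \ge 0$); for $\eta > \sqrt{\alpha+1}$ the true constrained minimum is $g(0) = \sigma(\alpha+1)/\eta$, which is positive in any case, so the stated condition $\eta < 2\sqrt{\alpha+1}$ is comfortably sufficient though not sharp.
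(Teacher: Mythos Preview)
Your proof is correct and follows essentially the same approach as the paper's one-line argument---find the minimum of $|\beta_j|+\sigma^2 p'(|\beta_j|)$ and require it to be positive. Your substitution $v=\sigma\eta+u$ together with AM--GM is a clean way to obtain the bound without calculus or casework on the constraint $u\ge 0$, and your closing remark that the stated condition is sufficient but not sharp (since the constrained minimum is positive for every $\eta>0$) is a correct observation the paper does not make explicit.
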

This result is obtained by finding the minimum of $|\beta_{j}|+\sigma^{2}p'(|\beta_{j}|)$ and taking it greater than zero. The thresholding is a direct consequence of the fact that when $|\hat{\beta}_{j}|<\min_{\beta_{j}}\{|\beta_{j}|+\sigma^{2}(\alpha+1)/(\sigma\eta+|\beta_{j}|)\}$, which requires that $\min_{\beta_{j}}\{|\beta_{j}|+\sigma^{2}p'(|\beta_{j}|)\}>0$, the derivative of (\ref{pls_one}) is positive for all positive $\beta_{j}$ and negative for all negative $\beta_{j}$. In this case, the penalized least squares estimator is zero. When $|\hat{\beta}_{j}|>\min_{\beta_{j}}\{|\beta_{j}|+\sigma^{2}(\alpha+1)/(\sigma\eta+|\beta_{j}|)\}$, two roots may exist. The larger one (in absolute value) or zero is the penalized least squares estimator. To elaborate more on this, the root(s) may exist for $\mbox{sgn}(\beta_{j})\{|\beta_{j}|+\sigma^{2}p'(|\beta_{j}|)\}-\hat{\beta}_{j}=0$ only when $|\hat{\beta}_{j}|>\min_{\beta_{j}}\{|\beta_{j}|+\sigma^{2}p'(|\beta_{j}|)\}$. A helpful illustration is Figure 3 of Fan and Li (2001).
\par
\vspace{10pt}

\noindent {\bf 4.3. Continuity}

Continuity in data is important if an estimator is to avoid instabilities in prediction. As in Breiman (1996), ``a regularization procedure is unstable if a small change in data can make large changes in the regularized estimator''.  Discontinuities in the thresholding rule may result in inclusion or dismissal of a signal with minor changes in the data used (see Figure \ref{fig3}(b)). Hard-thresholding, the ``usual'' variable selection, is an unstable procedure, while ridge and $\small{\mbox{LASSO}}$ estimates are considered stable.

A necessary and sufficient condition for continuity is that the minimum of the function $|\beta_{j}|+\sigma^{2}p'(|\beta_{j}|)$ is at zero (Fan and Li (2001)). For our prior, the minimum of this function is obtained at $|\beta_{j}|=\sigma(\sqrt{\alpha+1}-\eta)$. Therefore $\eta=\sqrt{\alpha+1}$ yields an estimator with this property.
\begin{prop}
Under the formulation in Proposition 1, a subfamily of prior (\ref{eq:gdP}) with $\eta=\sqrt{\alpha+1}$ yields an estimator with the continuity property.
\end{prop}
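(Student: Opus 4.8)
The plan is to invoke the Fan and Li (2001) continuity criterion quoted immediately above the statement: the thresholding rule $\hat{\beta}_j \mapsto \tilde{\beta}_j$ defined by (\ref{pls_one}) is continuous in the data if and only if the function $g(t) = t + \sigma^2 p'(t)$, viewed as a function of $t = |\beta_j| \ge 0$, attains its minimum at $t = 0$. So the entire task reduces to locating the minimizer of $g$ and choosing $\eta$ so that this minimizer sits exactly at the origin.

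First I would record the penalty derivative. From $p(|\beta_j|) = (\alpha+1)\log(\sigma\eta + |\beta_j|)$ we obtain $p'(t) = (\alpha+1)/(\sigma\eta + t)$, so that
\[
g(t) = t + \frac{\sigma^2(\alpha+1)}{\sigma\eta + t}, \qquad t \ge 0.
\]
Then I would differentiate, $g'(t) = 1 - \sigma^2(\alpha+1)/(\sigma\eta + t)^2$, and solve $g'(t) = 0$. Since $\sigma\eta + t > 0$ throughout the domain, the stationarity equation $(\sigma\eta + t)^2 = \sigma^2(\alpha+1)$ forces the positive root $\sigma\eta + t = \sigma\sqrt{\alpha+1}$, i.e.\ $t^\ast = \sigma(\sqrt{\alpha+1} - \eta)$.

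Next I would confirm that $t^\ast$ is genuinely the minimizer rather than a saddle or maximum. Because $g''(t) = 2\sigma^2(\alpha+1)/(\sigma\eta + t)^3 > 0$ for every $t > -\sigma\eta$, the map $g$ is strictly convex on $(-\sigma\eta,\infty) \supseteq [0,\infty)$, so its unique stationary point $t^\ast$ is the global minimizer. Imposing the Fan and Li requirement that this minimizer lie at the origin, $t^\ast = 0$, gives $\sqrt{\alpha+1} - \eta = 0$, that is $\eta = \sqrt{\alpha+1}$, which is exactly the claimed subfamily.

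The computation is elementary, so the only points needing care — and where I expect the substance of the argument to reside — are (i) selecting the correct positive square root when inverting the stationarity equation, which is justified by $\sigma\eta + t$ being positive on the domain, and (ii) checking that $t^\ast$ is a minimum over the constrained domain $[0,\infty)$ and not merely an unconstrained critical point; the strict convexity on all of $(-\sigma\eta,\infty)$ settles both at once. Finally I would note, for internal consistency, that $\eta = \sqrt{\alpha+1}$ satisfies $\eta < 2\sqrt{\alpha+1}$, so by Proposition 3 this same choice simultaneously yields a bona fide thresholding rule; the resulting estimator is thus both sparse and continuous, realizing the Fan and Li desiderata together.
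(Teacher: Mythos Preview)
Your proposal is correct and follows exactly the paper's approach: invoke the Fan--Li continuity criterion, compute the minimizer of $t + \sigma^2 p'(t)$ to be $t^\ast = \sigma(\sqrt{\alpha+1}-\eta)$, and set it to zero. You simply supply more detail (the explicit derivative, the convexity check, and the consistency remark with Proposition~3) than the paper's one-line justification, but the argument is the same.
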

In this particular case, the penalized likelihood estimator is set to zero if $|\hat{\beta}_{j}|\leq\sigma\sqrt{\alpha+1}$. When $|\hat{\beta}_{j}|>\sigma\sqrt{\alpha+1}$, 
\begin{equation}
\tilde{\beta}_{j}=\left\{\begin{array}{ll}
\frac{\hat{\beta}_{j}-\sigma\sqrt{\alpha+1}+\{\hat{\beta}_{j}^2+2\hat{\beta}_{j}\sigma\sqrt{\alpha+1}-3\sigma^2(\alpha+1)\}^{1/2}}{2} & \hat{\beta}_{j}>0, \\
\frac{\hat{\beta}_{j}+\sigma\sqrt{\alpha+1}-\{\hat{\beta}_{j}^2-2\hat{\beta}_{j}\sigma\sqrt{\alpha+1}-3\sigma^2(\alpha+1)\}^{1/2}}{2} & \hat{\beta}_{j}<0.
\end{array}\right.
\label{gdPMAP}
\end{equation} 
As can be observed in Figure \ref{fig3}(a), ensuring continuity by letting $\eta=\sqrt{\alpha+1}$ creates a trade-off between sparsity and tail-robustness. As the thresholding region becomes wider, the larger values are penalized further, yet not nearly at the level of $\small{\mbox{LASSO}}$.
\begin{figure}[!t]
\centering \subfigure[]{
\begin{minipage}{.49\linewidth}
 \centering\includegraphics[width=1\textwidth]{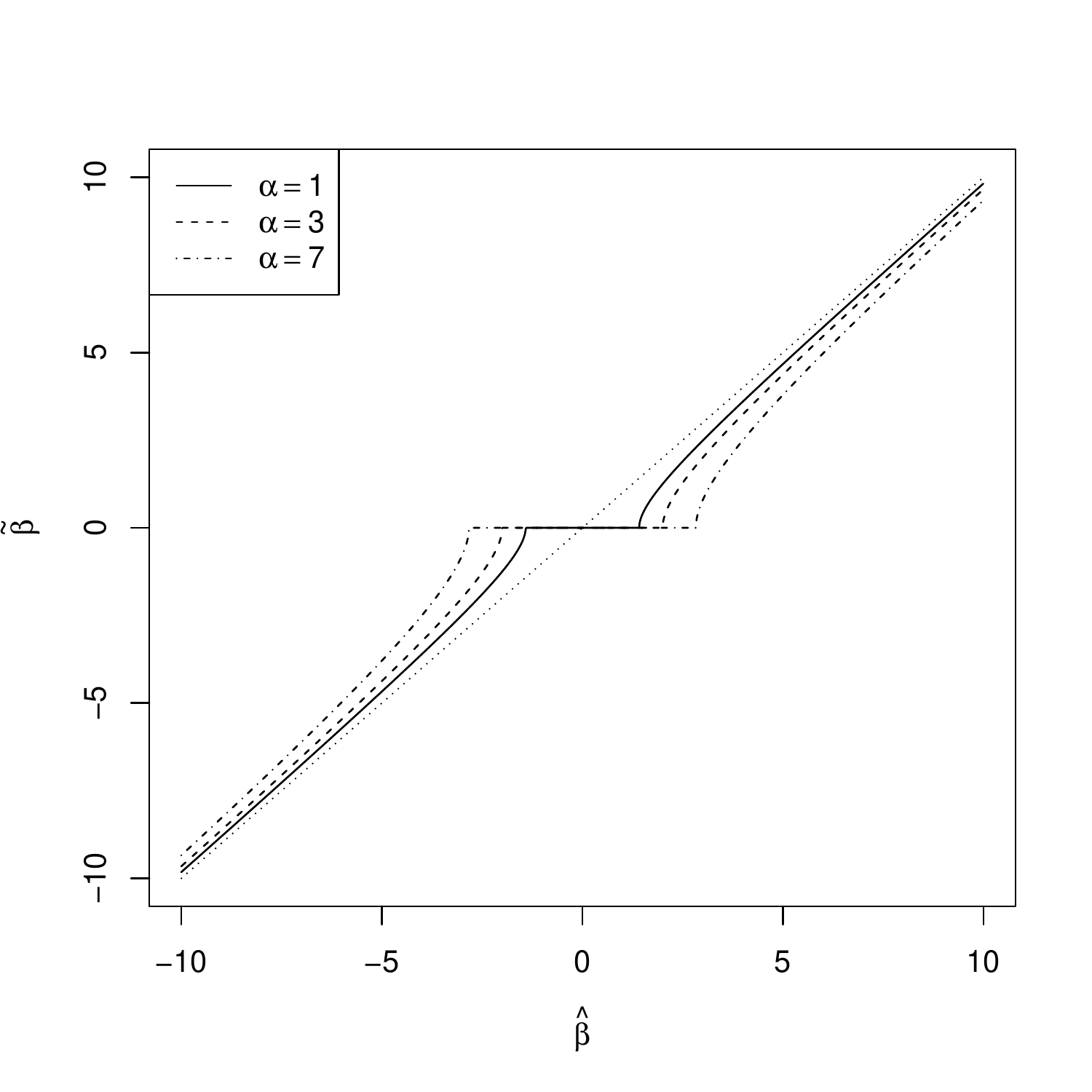}
\end{minipage}}
\centering \subfigure[]{
\begin{minipage}{.49\linewidth}
   \centering\includegraphics[width=1\textwidth]{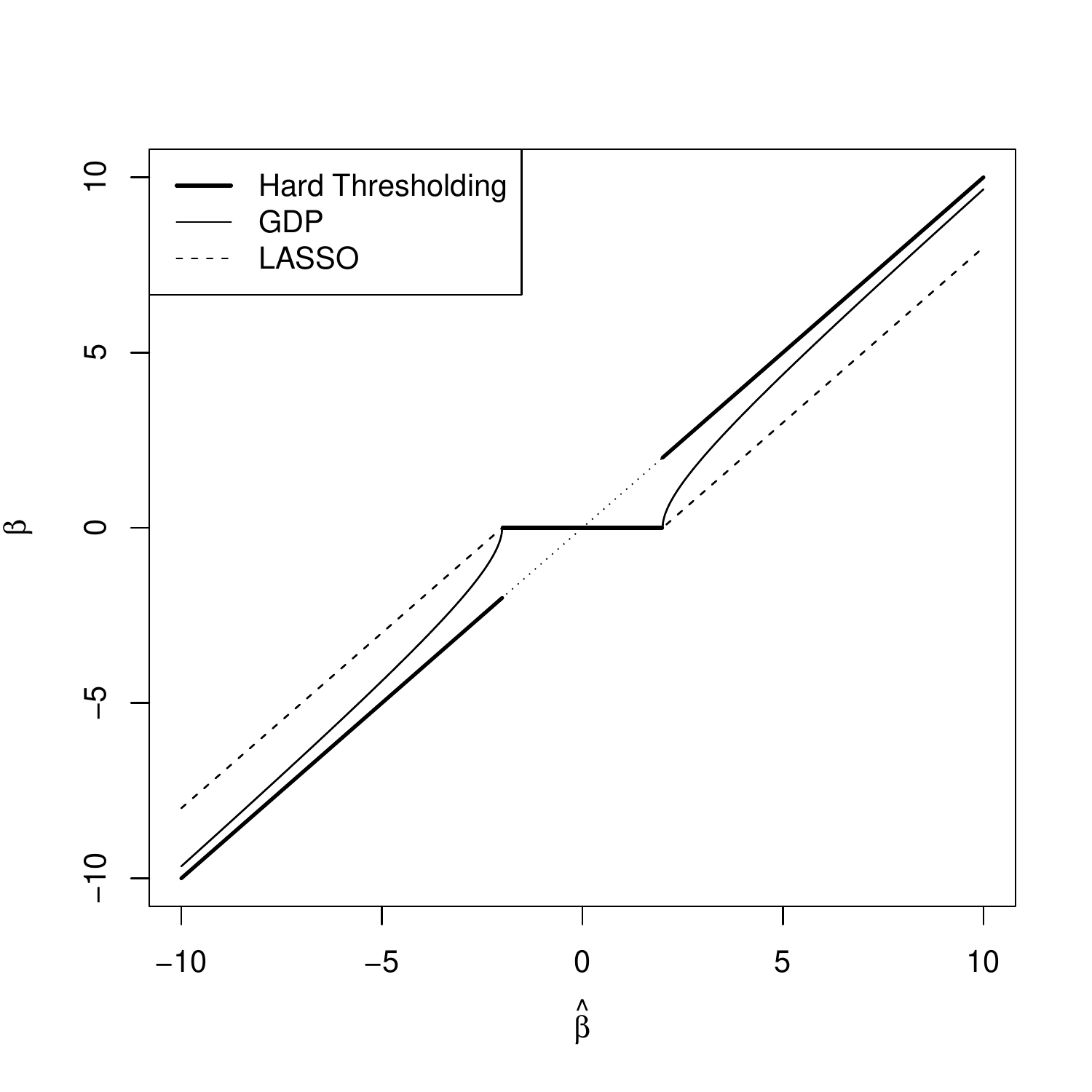}
\end{minipage}}
\caption{Thresholding functions for (a) generalized double Pareto prior with $\eta=\sqrt{\alpha+1}$, $\alpha=\{1,3,7\}$, (b) Hard thresholding, generalized double Pareto prior with $\eta=2$, $\alpha=3$ and $\small{\mbox{LASSO}}$ with $\sigma=1$. \label{fig3}} 
\end{figure}
\vspace{10pt}

\newpage

\noindent {\bf 4.4. Maximum a Posteriori Estimation via Expectation-Maximization}

We assume a normal likelihood to formulate the procedure for non-orthogonal linear regression. Estimation is carried out via the expectation-maximization ($\small{\mbox{EM}}$) algorithm.
\par
\vspace{10pt}

\noindent {\bf 4.4.1. Exploiting the Normal Mixture Representation}

We take the expectation of the log-posterior with respect to the conditional posterior distributions of $(\tau_{j}^{-1}|\beta_{j}^{(k)},\lambda_{j},\sigma^{2(k)})$ and $(\lambda_{j}|\beta_{j}^{(k)},\sigma^{2(k)})$ at the $k$th step, then maximize with respect to $\beta_{j}$ and $\sigma^{2}$ to get the values for the $(k+1)th$ step. 
\begin{itemize}
\item \emph{E-step}:  
\begin{eqnarray}
& &-\left(\frac{n+p}{2}+1\right)\log\sigma^{2}-\frac{\left(\mathbf{y}-\mathbf{X}\boldsymbol\beta\right)'\left(\mathbf{y}-\mathbf{X}\boldsymbol\beta\right)}{2\sigma^{2}}\nonumber\\
& &\hspace{2in} -\frac{1}{2\sigma^{2}}\sum_{j=1}^{p}\beta_{j}^{2}\underbrace{\left\{\frac{(\alpha+1)\sigma^{2(k)}}{|\beta_{j}^{(k)}|(|\beta_{j}^{(k)}|+\sigma^{(k)}\eta)}\right\}}_{d_{j}^{(k)}}\nonumber
\end{eqnarray}
\item \emph{M-step}: Letting $\mathbf{D}^{(k)}=\mbox{diag}(d_{1}^{(k)},\dots,d_{p}^{(k)})$, we have
\begin{eqnarray}
\boldsymbol\beta^{(k+1)}&=&(\mathbf{X}'\mathbf{X}+\mathbf{D}^{(k)})^{-1}\mathbf{X}'\mathbf{y}, \nonumber\\ \sigma^{2(k+1)}&=&\frac{(\mathbf{y}-\mathbf{X}\boldsymbol\beta^{(k+1)})'(\mathbf{y}-\mathbf{X}\boldsymbol\beta^{(k+1)})+\boldsymbol\beta^{(k+1)'}\mathbf{D}^{(k)}\boldsymbol\beta^{(k+1)}}{n+p+2}.\nonumber
\end{eqnarray}
\end{itemize} 
We refer to this estimator as $\small{\mbox{GDP(MAP)}}$. 
\par
\vspace{10pt}

\noindent {\bf 4.4.2. Exploiting the Laplace Mixture Representation and the One-step Estimator}

In the proof of Proposition 1, the integration over $\boldsymbol\tau$ leads to a Laplace mixture representation of the prior. Since the mixing distribution of the Laplace is a known distribution the required expectation is obtained with ease, resulting in the maximization step,
\begin{eqnarray}
& &\boldsymbol\beta^{(k+1)}=\nonumber\\
& &\hspace{0.1in}\mbox{arg}\max_{\boldsymbol\beta}\left\{-\frac{1}{2\sigma^{2(k)}}\left(\mathbf{y}-\mathbf{X}\boldsymbol\beta\right)'\left(\mathbf{y}-\mathbf{X}\boldsymbol\beta\right)-\frac{1}{\sigma^{(k)}}\sum_{j=1}^{p}|\beta_{j}|\left(\frac{\alpha+1}{|\beta_{j}^{(k)}|/\sigma^{(k)}+\eta}\right)\right\}, \label{kstep} \nonumber \\
& & \\
& &\sigma^{2(k+1)}=\frac{b^2-2ac-\sqrt{b^4-4acb^2}}{2a^2},\nonumber
\end{eqnarray}
where $a = -(n+p+2)$, $b = (\alpha+1)\sum_j|\beta_j^{(k+1)}|/(|\beta_j^{(k)}|/\sigma^{(k)}+\eta)$, and $c = (\mathbf{y}-\mathbf{X}\boldsymbol\beta^{(k+1)})'(\mathbf{y}-\mathbf{X}\boldsymbol\beta^{(k+1)})$.
The component-specific multiplier on $|\beta_{j}|$ is obtained from the expectation of $\lambda_{j}$ with respect to its conditional posterior distribution, $\pi(\lambda_{j}|\beta_{j},\sigma^{2})$. Similar results to (\ref{kstep}) are in Candes, Wakin and Boyd (2008), Cevher (2009), and Garrigues (2009). 

An intuitive relationship to the adaptive $\small{\mbox{LASSO}}$ of Zou (2006) and the one-step sparse estimator of Zou and Li (2008) can be seen via the Laplace mixture representation. As a computationally fast alternative to estimating the exact mode via the above $\small{\mbox{EM}}$ algorithm, we can obtain a ``one-step estimator'' and exploit the $\small{\mbox{LARS}}$ algorithm as in Zou and Li (2008). The one-step estimator is
\begin{equation}
\boldsymbol\beta^{(1)}=\mbox{arg}\min_{\boldsymbol\beta}\left\{\left(\mathbf{y}-\mathbf{X}\boldsymbol\beta\right)'\left(\mathbf{y}-\mathbf{X}\boldsymbol\beta\right)+\alpha^{\dag}\sum_{j=1}^{p}\frac{|\beta_{j}|}{|\beta_{j}^{(0)}|+\eta^{\dag}}\right\},
\label{one_step}
\end{equation}
with $\alpha^{\dag}=2\sigma^{2(0)}(\alpha+1)$ and $\eta^{\dag}=\sigma^{(0)}\eta$. This estimator resembles the adaptive $\small{\mbox{LASSO}}$. The $\small{\mbox{LARS}}$ algorithm can be used to obtain $\boldsymbol\beta^{(1)}$ very quickly. We refer to this estimator as $\small{\mbox{GDP(OS)}}$.

\noindent {\bf Remark 1.} For $\eta^{\dag}=0$, the $\small{\mbox{GDP(OS)}}$ solution path for varying $\alpha^{\dag}$ is identical to the adaptive $\small{\mbox{LASSO}}$ solution path with $\gamma=1$ (see (4) in Zou (2006)) using identical $\boldsymbol\beta^{(0)}$.

\noindent {\bf Remark 2.} $\small{\mbox{GDP(OS)}}$ forms a bridge between the $\small{\mbox{LASSO}}$ and the adaptive $\small{\mbox{LASSO}}$: as $\eta^{\dag}\rightarrow \infty$ and $\alpha^{\dag}/\eta^{\dag}\rightarrow \lambda^{\dag}<\infty$, $\small{\mbox{GDP(OS)}}$ gives the $\small{\mbox{LASSO}}$ solution with penalty parameter $\lambda^{\dag}$.

We derive the $\small{\mbox{GDP(OS)}}$ estimator only to reveal a close connection with the adaptive $\small{\mbox{LASSO}}$ of Zou (2006) and do not use it in our experiments.
\par
\vspace{10pt}

\noindent {\bf 4.4.3. Normal vs. Laplace Representations in Computation}

As pointed out by an anonymous referee, it is appropriate to compare the convergence behavior of the $\small{\mbox{EM}}$ algorithms that exploit different mixture representations. We generated $n=\{200,400,600,800,1000\}$ observations from  $y_{i}=\mathbf{x}'_{i}\boldsymbol\beta^*+\epsilon_{i}$, where the $x_{ij}$ were independent standard normals for $p=\{20,40,60,80,100\}$, $\epsilon_{i}\sim \mbox{N}(0,\sigma^{2})$, and $\sigma=3$. We set the first $p/4$ components of $\boldsymbol\beta^*$ to be $1$ and the rest to $0$. For each $(n,p)$ combination we simulated $100$ data sets and ran the $\small{\mbox{EM}}$ algorithms obtained from normal and Laplace scale mixture representations. Figure \ref{fig4} illustrates the number of iterations taken by the two algorithms until $\|\boldsymbol\beta^{(k+1)}-\boldsymbol\beta^{(k)}\|_2<10^{-6}$. As expected, the convergence under the Laplace mixture representation was much faster with the intermediary mixing parameter $\tau_j$ integrated out rather than using the expectation step in the $\small{\mbox{EM}}$ algorithm.

\begin{figure}[!t]
 \centering\includegraphics[width=1\textwidth]{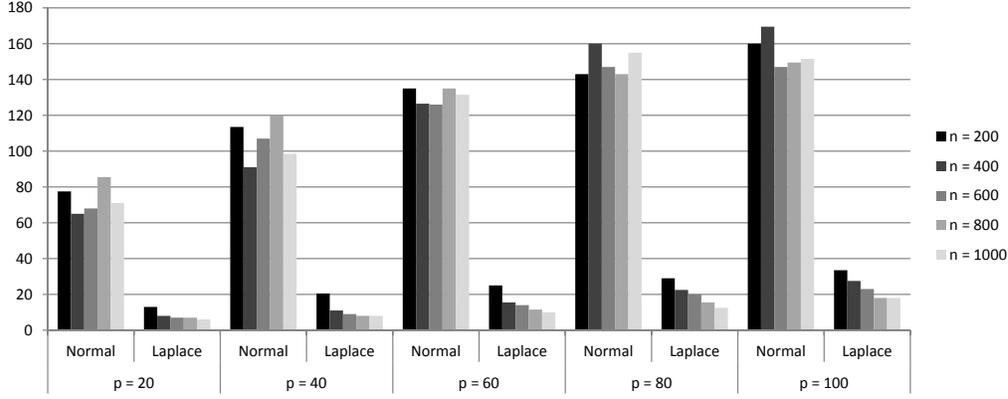}
\caption{Number of iterations until convergence of the $\small{\mbox{EM}}$ algorithms under normal and Laplace representations. \label{fig4}} 
\end{figure}
\par
\vspace{10pt}

\noindent {\bf 4.5. Oracle Properties}

Following Zou (2006) and Zou and Li (2008), we show that the $\small{\mbox{GDP(MAP)}}$ and $\small{\mbox{GDP(OS)}}$ estimators possess oracle properties. Relaxing the normality assumption on the error term leads to two conditions for Theorem 2 and Theorem 3.
\begin{enumerate}
\item[(A1)] $y_{i}=\mathbf{x}_{i}\boldsymbol\beta^{*}+\epsilon_{i}$ where $\epsilon_{1},\dots,\epsilon_{n}$ are independent and identically distributed with mean $0$ and variance $\sigma^{2}$.
\item[(A2)] $\frac{1}{n}\mathbf{X}'\mathbf{X}\rightarrow \mathbf{C}$, where $\mathbf{C}$ is a positive definite matrix.
\end{enumerate}
In what follows, $\mathcal{A}=\{j:\beta^{*}_{j}\neq 0, j=1,\dots,p\}$, $\boldsymbol\beta_{\mathcal{A}}$ retains the entries of $\boldsymbol\beta$ indexed by $\mathcal{A}$, and $\mathbf{C}_{\mathcal{A}}$ retains the rows and columns of $\mathbf{C}$ indexed by $\mathcal{A}$.

\begin{thm}
Let
\begin{equation}
\boldsymbol\beta^{(\infty)}_n=\textup{arg}\min_{\boldsymbol\beta}\left\{\left(\mathbf{y}-\mathbf{X}\boldsymbol\beta\right)'\left(\mathbf{y}-\mathbf{X}\boldsymbol\beta\right)+\alpha_n'\sum_{j=1}^{p}\log\left(|\beta_{j}|+\eta'_{n}\right)\right\}\nonumber
\end{equation}
denote the $\small{\mbox{GDP(MAP)}}$ estimator, where $\alpha'_{n}=2\sigma^2(\alpha_n+1)$ and $\eta'_{n}=\sigma\eta_n$. Let  $\mathcal{A}_{n}=\{j:\beta_{nj}^{(\infty)}\neq 0, j=1,\dots,p\}$. Suppose that $\alpha'_n\rightarrow\infty$, $\alpha'_{n}/\sqrt{n}\rightarrow0$ and, $\eta'_n\sqrt{n}\rightarrow c<\infty$. Then $\boldsymbol\beta^{(\infty)}_n$ is
\begin{enumerate}
\item consistent in variable selection in that $\lim_{n\rightarrow \infty}\mathbb{P}(\mathcal{A}_{n}=\mathcal{A})=1$;
\item asymptotically normal with $\sqrt{n}(\boldsymbol\beta^{(\infty)}_{n\mathcal{A}}-\boldsymbol\beta^{*}_{\mathcal{A}})\conind \mbox{N}(\mathbf{0},\sigma^{2}\mathbf{C}_{\mathcal{A}}^{-1})$.
\end{enumerate}
\end{thm}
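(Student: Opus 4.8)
The plan is to follow the reparametrization-and-argmin strategy of Knight and Fu (2000) and Zou (2006). First I would write $\boldsymbol\beta=\boldsymbol\beta^{*}+\mathbf u/\sqrt n$ and study $V_n(\mathbf u)=\Psi_n(\boldsymbol\beta^{*}+\mathbf u/\sqrt n)-\Psi_n(\boldsymbol\beta^{*})$, where $\Psi_n$ is the penalized objective defining $\boldsymbol\beta^{(\infty)}_n$, so that the minimizer $\hat{\mathbf u}^{(n)}$ of $V_n$ equals $\sqrt n(\boldsymbol\beta^{(\infty)}_n-\boldsymbol\beta^{*})$. Substituting $\mathbf y=\mathbf X\boldsymbol\beta^{*}+\boldsymbol\epsilon$, the least-squares increment is $\mathbf u'(\mathbf X'\mathbf X/n)\mathbf u-2\mathbf u'(\mathbf X'\boldsymbol\epsilon/\sqrt n)$, which under (A1)--(A2) and the Lindeberg--Feller CLT converges in distribution to $\mathbf u'\mathbf C\mathbf u-2\mathbf u'\mathbf W$ with $\mathbf W\sim\mathrm N(\mathbf 0,\sigma^{2}\mathbf C)$.

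The decisive computation is the limit of the penalty increment $\alpha'_n\sum_{j}\{\log(|\beta^{*}_j+u_j/\sqrt n|+\eta'_n)-\log(|\beta^{*}_j|+\eta'_n)\}$, treated componentwise. For $j\in\mathcal A$ a first-order expansion produces a term of order $(\alpha'_n/\sqrt n)\,\sgn(\beta^{*}_j)u_j/(|\beta^{*}_j|+\eta'_n)$, which vanishes since $\alpha'_n/\sqrt n\to0$ and $\eta'_n\to0$. For $j\notin\mathcal A$ the increment equals $\alpha'_n\log(1+|u_j|/(\sqrt n\,\eta'_n))$; because $\sqrt n\,\eta'_n\to c$ and $\alpha'_n\to\infty$, this diverges to $+\infty$ whenever $u_j\neq0$ and is $0$ when $u_j=0$. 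Hence $V_n\conind V$ with
\[
V(\mathbf u)=\left\{\begin{array}{ll}\mathbf u'_{\mathcal A}\mathbf C_{\mathcal A}\mathbf u_{\mathcal A}-2\mathbf u'_{\mathcal A}\mathbf W_{\mathcal A}, & \mathbf u_{\mathcal A^{c}}=\mathbf 0,\\[2pt] +\infty, & \text{otherwise,}\end{array}\right.
\]
where $\mathbf W_{\mathcal A}\sim\mathrm N(\mathbf 0,\sigma^{2}\mathbf C_{\mathcal A})$. The unique minimizer is $\hat{\mathbf u}_{\mathcal A}=\mathbf C_{\mathcal A}^{-1}\mathbf W_{\mathcal A}\sim\mathrm N(\mathbf 0,\sigma^{2}\mathbf C_{\mathcal A}^{-1})$ with $\hat{\mathbf u}_{\mathcal A^{c}}=\mathbf 0$, so passing from $V_n\conind V$ to $\hat{\mathbf u}^{(n)}\conind\hat{\mathbf u}$ yields the asymptotic normality in part 2 together with $\sqrt n$-consistency and $\boldsymbol\beta^{(\infty)}_{n\mathcal A}\coninp\boldsymbol\beta^{*}_{\mathcal A}$.

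For part 1, consistency rules out false negatives: since $\boldsymbol\beta^{(\infty)}_{n\mathcal A}\coninp\boldsymbol\beta^{*}_{\mathcal A}$ and $\beta^{*}_j\neq0$ for $j\in\mathcal A$, we get $\mathbb P(\mathcal A\subseteq\mathcal A_n)\to1$. The absence of false positives does not follow from argmin convergence (which only gives $\hat u^{(n)}_j\coninp0$, not exact zeros) and must be argued separately. If $j\notin\mathcal A$ with $\beta^{(\infty)}_{nj}\neq0$, stationarity forces $2[\mathbf X'(\mathbf y-\mathbf X\boldsymbol\beta^{(\infty)}_n)]_j=\alpha'_n\,\sgn(\beta^{(\infty)}_{nj})/(|\beta^{(\infty)}_{nj}|+\eta'_n)$; the left side is $O_p(\sqrt n)$ while the right side has magnitude $\alpha'_n/(|\beta^{(\infty)}_{nj}|+\eta'_n)$, which dominates $\sqrt n$ because $|\beta^{(\infty)}_{nj}|+\eta'_n=O_p(1/\sqrt n)$ and $\alpha'_n\to\infty$, a contradiction. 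Thus $\mathbb P(j\in\mathcal A_n)\to0$ for each $j\notin\mathcal A$, and combining the two halves gives $\mathbb P(\mathcal A_n=\mathcal A)\to1$.

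The main obstacle I anticipate is that the log penalty is non-convex, so the convex-argmin lemma of Knight and Fu (2000) exploited by Zou (2006) does not apply directly; to justify $\hat{\mathbf u}^{(n)}\conind\hat{\mathbf u}$ I would invoke epiconvergence-in-distribution in the sense of Geyer (1994), verifying the required tightness from coercivity of the quadratic term and uniqueness of the minimizer of $V$. The exact-zero step is the other delicate point, as its magnitude comparison relies on the $\sqrt n$-consistency established earlier to bound $|\beta^{(\infty)}_{nj}|+\eta'_n$.
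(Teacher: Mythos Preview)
Your proposal is correct and follows essentially the same route as the paper: the same reparametrization $\boldsymbol\beta=\boldsymbol\beta^{*}+\mathbf u/\sqrt n$, the same split of $V_n(\mathbf u)-V_n(\mathbf 0)$ into the quadratic-minus-linear term and the penalty increment, the same case analysis on $j\in\mathcal A$ versus $j\notin\mathcal A$, epiconvergence via Geyer (1994), and the same KKT magnitude comparison to exclude false positives. If anything you are slightly more careful than the paper, which asserts that $V_n(\mathbf u)-V_n(\mathbf 0)$ is convex (it is not, because of the log penalty) while still citing Geyer; your explicit flagging of non-convexity and appeal to epiconvergence-in-distribution is the right justification.
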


\noindent{\bf Remark 3.} More generally, the above results hold if $\alpha'_n/(\sqrt{n}\eta'_n)\rightarrow\infty$ and $\alpha'_{n}/\sqrt{n}\rightarrow0$.

\begin{thm}
Let $\boldsymbol\beta^{(1)}_n$ denote the $\small{\mbox{GDP(OS)}}$ estimator in (\ref{one_step}) and $\mathcal{A}_{n}=\{j:\beta_{nj}^{(1)}\neq 0, j=1,\dots,p\}$. Suppose that $\alpha^{\dag}_{n}\rightarrow\infty$, $\alpha^{\dag}_{n}/\sqrt{n}\rightarrow0$, and $\eta^{\dag}_{n}\sqrt{n}\rightarrow c<\infty$. Then $\boldsymbol\beta^{(1)}_n$ is
\begin{enumerate}
\item consistent in variable selection in that $\lim_{n\rightarrow \infty}\mathbb{P}(\mathcal{A}_{n}=\mathcal{A})=1$;
\item asymptotically normal with $\sqrt{n}(\boldsymbol\beta^{(1)}_{n\mathcal{A}}-\boldsymbol\beta^{*}_{\mathcal{A}})\conind \mbox{N}({0},\sigma^{2}\mathbf{C}_{\mathcal{A}}^{-1})$.
\end{enumerate}
\end{thm}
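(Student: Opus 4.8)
The plan is to follow the argmin-of-a-convex-criterion technique of Knight and Fu (2000) and Zou (2006), since the GDP(OS) objective in (\ref{one_step}) is an adaptively weighted $\small{\mbox{LASSO}}$ with data-dependent weights $w_{nj}=1/(|\beta_{j}^{(0)}|+\eta^{\dag}_{n})$ and overall penalty level $\alpha^{\dag}_{n}$. Throughout I take the pilot estimator $\boldsymbol\beta^{(0)}$ to be $\sqrt{n}$-consistent (e.g.\ ordinary least squares), so that $\sqrt{n}(\boldsymbol\beta^{(0)}-\boldsymbol\beta^{*})=O_{p}(1)$ with a coordinatewise continuous limiting law. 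I would first reparametrize by $\mathbf{u}=\sqrt{n}(\boldsymbol\beta-\boldsymbol\beta^{*})$ and study the recentred criterion $V_{n}(\mathbf{u})=\Psi_{n}(\boldsymbol\beta^{*}+\mathbf{u}/\sqrt{n})-\Psi_{n}(\boldsymbol\beta^{*})$, where $\Psi_{n}$ is the objective in (\ref{one_step}). Writing $\boldsymbol\epsilon=\mathbf{y}-\mathbf{X}\boldsymbol\beta^{*}$ and expanding the least-squares part gives $\mathbf{u}'(\mathbf{X}'\mathbf{X}/n)\mathbf{u}-2(\boldsymbol\epsilon'\mathbf{X}/\sqrt{n})\mathbf{u}$, which by (A2) and the central limit theorem under (A1) converges jointly to $\mathbf{u}'\mathbf{C}\mathbf{u}-2\mathbf{W}'\mathbf{u}$ with $\mathbf{W}\sim\mathrm{N}(\mathbf{0},\sigma^{2}\mathbf{C})$.

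The crux is the penalty term $\alpha^{\dag}_{n}\sum_{j}w_{nj}(|\beta^{*}_{j}+u_{j}/\sqrt{n}|-|\beta^{*}_{j}|)$, which I would split according to whether $j\in\mathcal{A}$. For $j\in\mathcal{A}$, $w_{nj}\to 1/|\beta^{*}_{j}|$ while $\sqrt{n}(|\beta^{*}_{j}+u_{j}/\sqrt{n}|-|\beta^{*}_{j}|)\to u_{j}\,\sgn(\beta^{*}_{j})$, so the contribution is $(\alpha^{\dag}_{n}/\sqrt{n})\,w_{nj}\,u_{j}\,\sgn(\beta^{*}_{j})(1+o_{p}(1))\to 0$ because $\alpha^{\dag}_{n}/\sqrt{n}\to 0$. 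For $j\notin\mathcal{A}$, the increment equals $|u_{j}|/\sqrt{n}$, and since $\sqrt{n}(|\beta_{j}^{(0)}|+\eta^{\dag}_{n})=\sqrt{n}|\beta_{j}^{(0)}|+\sqrt{n}\eta^{\dag}_{n}=O_{p}(1)$ with limit $|Z_{j}|+c$ (positive almost surely by continuity of the pilot's limiting law), the term is $\alpha^{\dag}_{n}\,|u_{j}|/(\sqrt{n}|\beta_{j}^{(0)}|+\sqrt{n}\eta^{\dag}_{n})\to\infty$ whenever $u_{j}\neq 0$, using $\alpha^{\dag}_{n}\to\infty$ and $\eta^{\dag}_{n}\sqrt{n}\to c<\infty$. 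Hence $V_{n}(\mathbf{u})\conind V(\mathbf{u})$, where $V(\mathbf{u})=\mathbf{u}_{\mathcal{A}}'\mathbf{C}_{\mathcal{A}}\mathbf{u}_{\mathcal{A}}-2\mathbf{W}_{\mathcal{A}}'\mathbf{u}_{\mathcal{A}}$ on the subspace $\{u_{j}=0,\ j\notin\mathcal{A}\}$ and $V(\mathbf{u})=\infty$ elsewhere.

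Because each $V_{n}$ is convex and the limit $V$ has the unique minimizer $(\mathbf{C}_{\mathcal{A}}^{-1}\mathbf{W}_{\mathcal{A}},\mathbf{0})$, the epi-convergence argument behind Knight and Fu (2000) transfers convergence of the criteria to convergence of their minimizers, yielding $\sqrt{n}(\boldsymbol\beta^{(1)}_{n}-\boldsymbol\beta^{*})\conind(\mathbf{C}_{\mathcal{A}}^{-1}\mathbf{W}_{\mathcal{A}},\mathbf{0})$. This is precisely the asymptotic normality claim $\sqrt{n}(\boldsymbol\beta^{(1)}_{n\mathcal{A}}-\boldsymbol\beta^{*}_{\mathcal{A}})\conind\mathrm{N}(\mathbf{0},\sigma^{2}\mathbf{C}_{\mathcal{A}}^{-1})$, and the vanishing of the $\mathcal{A}^{c}$ block gives $\beta^{(1)}_{nj}\coninp 0$ for $j\notin\mathcal{A}$. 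For selection consistency, the direction $\mathbb{P}(j\in\mathcal{A}_{n})\to 1$ for $j\in\mathcal{A}$ follows from the coordinatewise consistency $\beta^{(1)}_{nj}\coninp\beta^{*}_{j}\neq 0$ implied by part (2). For $j\notin\mathcal{A}$, I would invoke the Karush--Kuhn--Tucker condition: on the event $\{\beta^{(1)}_{nj}\neq 0\}$ one has $2\mathbf{X}_{j}'(\mathbf{y}-\mathbf{X}\boldsymbol\beta^{(1)}_{n})=\alpha^{\dag}_{n}w_{nj}\,\sgn(\beta^{(1)}_{nj})$, where $\mathbf{X}_{j}$ is the $j$th column of $\mathbf{X}$. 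The left-hand side is $O_{p}(\sqrt{n})$ (from $\sqrt{n}(\boldsymbol\beta^{(1)}_{n}-\boldsymbol\beta^{*})=O_{p}(1)$, (A2), and $\mathbf{X}_{j}'\boldsymbol\epsilon/\sqrt{n}=O_{p}(1)$), whereas the right-hand side is of order $\alpha^{\dag}_{n}\sqrt{n}$, which dominates because $\alpha^{\dag}_{n}\to\infty$; the equality therefore fails with probability tending to one, forcing $\beta^{(1)}_{nj}=0$ and hence $\mathbb{P}(j\in\mathcal{A}_{n})\to 0$. Combining the two directions gives $\mathbb{P}(\mathcal{A}_{n}=\mathcal{A})\to 1$.

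I expect the main obstacle to be the rigorous argmin-convergence step, namely justifying that $\conind$ of the convex criteria $V_{n}$ to a limit $V$ possessing a unique minimizer transfers to $\conind$ of the minimizers, together with the delicate bookkeeping on the inactive weights: one must use $\sqrt{n}$-consistency of $\boldsymbol\beta^{(0)}$ and the continuity of its limiting distribution to ensure that $\sqrt{n}(|\beta_{j}^{(0)}|+\eta^{\dag}_{n})$ stays bounded away from zero in probability under the regime $\eta^{\dag}_{n}\sqrt{n}\to c<\infty$, so that the inactive penalty genuinely diverges while the active penalty vanishes.
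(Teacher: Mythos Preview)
Your proposal is correct and follows essentially the same route as the paper's own proof: reparametrize by $\mathbf{u}=\sqrt{n}(\boldsymbol\beta-\boldsymbol\beta^{*})$, decompose $V_{n}(\mathbf{u})-V_{n}(\mathbf{0})$ into the quadratic, linear, and penalty parts, split the penalty according to $j\in\mathcal{A}$ versus $j\notin\mathcal{A}$, invoke epiconvergence of convex functions (Geyer (1994), Knight and Fu (2000)) to pass argmin convergence, and finish selection consistency via the KKT stationarity condition exactly as in Zou (2006). The paper likewise takes $\boldsymbol\beta^{(0)}$ to be least squares, and your side remark about needing $\sqrt{n}(|\beta_{j}^{(0)}|+\eta^{\dag}_{n})$ to be bounded away from zero in probability is a welcome point of rigor that the paper leaves implicit.
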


The proofs are deferred to Section 8.
\par
\vspace{10pt}

\setcounter{chapter}{5}
\setcounter{equation}{0} 
\noindent {\bf 5. Experiments}

\noindent {\bf 5.1. Simulation}

In this section, we compare the proposed estimators to the posterior means obtained under the normal, Laplace, and horseshoe priors, to the Bayesian model averaged ($\small{\mbox{BMA}}$) estimator, as well as to the sparse estimates resulting from $\small{\mbox{LASSO}}$ (Tibshirani (1996)) and $\small{\mbox{SCAD}}$ (Fan and Li (2001)).  $\small{\mbox{GDP(PM)}}$ and $\small{\mbox{GDP(MAP)}}$ denote the posterior mean and the $\small{\mbox{MAP}}$ estimates, respectively, under the generalized double Pareto prior. Hyper-parameter values are provided in footnotes of Tables \ref{tab1} and \ref{tab2} when fixed in advance and are otherwise treated as random with the priors specified in Section 3. When not fixed, we first obtain the posterior means of the hyper-parameters from an initial Bayesian analysis, then use them in the calculation of the $\small{\mbox{MAP}}$ estimates.

We generated $n=\{50,400\}$ observations from  $y_{i}=\mathbf{x}'_{i}\boldsymbol\beta^*+\epsilon_{i}$, where the $x_{ij}$ were standard normals with $\mbox{Cov}({x}_{j},{x}_{j'})=0.5^{|j-j'|}$, $\epsilon_{i}\sim \mbox{N}(0,\sigma^{2})$, and $\sigma=3$. We used the following $\boldsymbol\beta^*$ configurations:

\noindent\emph{Model 1}: $5$ randomly chosen components of $\boldsymbol\beta^*$ set to $1$ and the rest to $0$.\\
\emph{Model 2}: $5$ randomly chosen components of $\boldsymbol\beta^*$ set to $3$ and the rest to $0$.\\
\emph{Model 3}: $10$ randomly chosen components of $\boldsymbol\beta^*$ set to $1$ and the rest to $0$.\\
\emph{Model 4}: $10$ randomly chosen components of $\boldsymbol\beta^*$ set to $3$ and the rest to $0$.\\
\emph{Model 5}: $\boldsymbol\beta^*=(0.85,\ldots,0.85)'$.

In our experiments $\mathbf{y}$ and the columns of $\mathbf{X}$ were centered and the columns of $\mathbf{X}$ scaled to have unit length.  For the calculation of competing estimators we used \texttt{lars} (Hastie and Efron (2011)), \texttt{SIS} (Fan et al. (2010)), \texttt{monomvn} (Gramacy (2010)) and \texttt{BAS} (Clyde and Littman (2005), Clyde, Ghosh, and Littman (2010)) packages in \texttt{R}. We mainly followed the default settings provided by the packages. Under the normal prior, the so-called ``ridge'' parameter was given an inverse gamma prior with shape and scale parameters $10^{-3}$. Under the Laplace prior, as a default choice, a gamma prior was placed on the ``$\small{\mbox{LASSO}}$ parameter'' $\lambda^2$, as given in (6) of Park and Casella (2008), with shape and rate parameters $2$ and $0.1$, respectively. Under the horseshoe prior, the \texttt{monomvn} package uses the hierarchy given in Section 1.1 of Carvalho, Polson, and Scott (2010). For $\small{\mbox{BMA}}$, we used the default settings of the \texttt{BAS} package that employs the Zellner-Siow prior given in Section 3$.$1 of Liang et al. (2008). The tuning for $\small{\mbox{LASSO}}$ and $\small{\mbox{SCAD}}$ were carried out by the criteria given in Yuan and Lin (2005) and  Wang, Li, and Tsai (2007), respectively, avoiding cross-validation.

$100$ data sets were generated for each case. In Table \ref{tab1}, we report the median model error. Model error was calculated as $(\boldsymbol\beta^*-\hat{\boldsymbol\beta})'\mathbf{C}(\boldsymbol\beta^*-\hat{\boldsymbol\beta})$, where $\mathbf{C}$ is the variance-covariance matrix that generated $\mathbf{X}$ and $\hat{\boldsymbol\beta}$ denotes the estimator in use. The values in the subscripts give the bootstrap standard error of the median model error values obtained. The bootstrap standard error was calculated by generating 500 bootstrap samples from 100 model error values, finding the median model error for each case, and then calculating the standard error for it. Under each model, the best three performances are boldfaced in the tables. 

$\small{\mbox{GDP(PM)}}$ estimates showed a similar performance to that of horseshoe under sparse setups. $\small{\mbox{GDP(PM)}}$ (with $\alpha$ and $\eta$ unknown) also showed great flexibility in adapting to dense models with small signals. $\small{\mbox{GDP(MAP)}}$ estimates performed similarly to $\small{\mbox{SCAD}}$ and much better than $\small{\mbox{LASSO}}$, particularly so with increasing sparsity, signal and/or sample size. The $\small{\mbox{GDP(PM)}}$ and $\small{\mbox{GDP(MAP)}}$ calculations are straightforward and computationally inexpensive due to the normal (and Laplace) scale mixture representation used. Being able to use a simple Gibbs sampler (especially when $\alpha=\eta=1$) makes the procedure attractive for the average user. 

Letting $\alpha=\eta=1$ may be somewhat restrictive if the underlying model is very dense or very sparse, but in the cases we considered, it performed comparably to others and we believe that it constitutes a good default prior similar to standard Cauchy with the added advantage of thresholding ability. Although we do not take up $p\gg n$ cases in this paper, in such situations much larger values of $\alpha$ would need to be chosen to adjust for multiplicity.

\begin{table}[!h]
\caption{Model error comparisons.}
\small
\label{tab1}
\begin{center}
\begin{threeparttable}[b]
\begin{tabular}{lrrrrr}
& \multicolumn{5}{c}{$n=50$} \\
\hline
\bf Method  & \bf Model 1 & \bf Model 2 & \bf Model 3 & \bf Model 4 & \bf Model 5\\
\hline
Normal				& $2.299_{0.085}$       & $4.879_{0.263}$    & $\bf2.585_{0.134}$ & $4.972_{0.385}$     & $\bf2.886_{0.150}$	\\
Laplace				& $2.634_{0.137}$       & $3.662_{0.233}$    & $\bf2.837_{0.126}$ & $4.326_{0.211}$     & $\bf3.458_{0.120}$  \\
Horseshoe			& $\bf2.264_{0.086}$    & $2.316_{0.167}$    & $3.205_{0.140}$    & $3.929_{0.218}$     & $4.409_{0.130}$	\\
$\small{\mbox{BMA}}$		& $2.451_{0.123}$       & $\bf1.647_{0.126}$ & $4.043_{0.233}$    & $\bf3.062_{0.194}$  & $6.015_{0.301}$	\\
$\small{\mbox{GDP(PM)}}^{1}$   	& $2.306_{0.114}$       & $2.405_{0.192}$    & $3.193_{0.215}$    & $4.123_{0.304}$     & $4.283_{0.142}$	\\
$\small{\mbox{GDP(PM)}}^{2}$ 	& $2.303_{0.095}$       & $2.309_{0.195}$    & $3.124_{0.153}$    & $3.910_{0.237}$     & $4.451_{0.109}$  \\
$\small{\mbox{GDP(PM)}}$    	& $\bf2.271_{0.085}$    & $2.606_{0.167}$    & $\bf3.047_{0.147}$ & $4.348_{0.171}$     & $\bf3.640_{0.134}$	\\
$\small{\mbox{GDP(MAP)}}^{1}$   & $3.414_{0.148}$       & $\bf1.619_{0.150}$ & $5.605_{0.298}$    & $\bf2.970_{0.168}$  & $8.769_{0.403}$	\\
$\small{\mbox{GDP(MAP)}}^{2}$   & $4.250_{0.354}$       & $\bf1.618_{0.153}$ & $6.331_{0.300}$    & $\bf3.040_{0.163}$  & $9.308_{0.377}$	\\
$\small{\mbox{GDP(MAP)}}$	& $4.876_{0.355}$       & $2.091_{0.182}$    & $4.299_{0.222}$    & $3.740_{0.284}$     & $5.724_{0.177}$	\\
$\small{\mbox{LASSO}}$	        & $\bf2.183_{0.124}$    & $2.618_{0.152}$    & $3.258_{0.194}$    & $3.531_{0.172}$     & $5.646_{0.229}$	\\
$\small{\mbox{SCAD}}$	        & $3.732_{0.214}$       & $2.132_{0.229}$    & $5.249_{0.239}$    & $3.179_{0.193}$     & $8.505_{0.387}$	\\
\hline
& \multicolumn{5}{c}{$n=400$} \\
\hline
Normal				& $0.395_{0.014}$    & $0.455_{0.019}$    & $0.426_{0.016}$    & $0.455_{0.024}$     & $\bf0.412_{0.013}$	\\
Laplace				& $0.315_{0.016}$    & $0.374_{0.014}$    & $0.388_{0.016}$    & $0.422_{0.015}$     & $\bf0.457_{0.014}$  \\
Horseshoe			& $0.219_{0.016}$    & $0.205_{0.010}$    & $0.341_{0.014}$    & $0.346_{0.009}$     & $0.514_{0.023}$	\\
$\small{\mbox{BMA}}$		& $\bf0.151_{0.011}$ & $0.125_{0.005}$    & $\bf0.240_{0.016}$ & $0.211_{0.009}$     & $0.646_{0.037}$	\\
$\small{\mbox{GDP(PM)}}^{1}$   	& $0.233_{0.016}$    & $0.206_{0.009}$    & $0.326_{0.015}$    & $0.284_{0.014}$     & $0.625_{0.031}$	\\
$\small{\mbox{GDP(PM)}}^{2}$ 	& $0.228_{0.017}$    & $0.215_{0.009}$    & $0.332_{0.013}$    & $0.303_{0.010}$     & $0.579_{0.027}$  \\
$\small{\mbox{GDP(PM)}}$    	& $0.248_{0.017}$    & $0.182_{0.007}$    & $0.377_{0.016}$    & $0.362_{0.012}$     & $\bf0.466_{0.016}$	\\
$\small{\mbox{GDP(MAP)}}^{1}$   & $\bf0.154_{0.014}$ & $\bf0.111_{0.011}$ & $0.286_{0.016}$    & $\bf0.210_{0.011}$  & $0.739_{0.043}$	\\
$\small{\mbox{GDP(MAP)}}^{2}$   & $0.161_{0.013}$    & $\bf0.111_{0.010}$ & $\bf0.284_{0.016}$ & $\bf0.210_{0.009}$  & $0.652_{0.035}$	\\
$\small{\mbox{GDP(MAP)}}$	& $0.185_{0.017}$    & $0.119_{0.010}$    & $0.326_{0.016}$    & $0.336_{0.010}$     & $0.478_{0.020}$	\\
$\small{\mbox{LASSO}}$	        & $0.251_{0.014}$    & $0.276_{0.014}$    & $0.339_{0.020}$    & $0.348_{0.011}$     & $0.485_{0.021}$	\\
$\small{\mbox{SCAD}}$	        & $\bf0.121_{0.010}$ & $\bf0.118_{0.008}$ & $\bf0.233_{0.011}$ & $\bf0.206_{0.017}$  & $0.469_{0.019}$	\\
\hline
\end{tabular}
\begin{tablenotes}
\footnotesize
\item $^{1}$$\alpha=1$, $\eta=1$; $^{2}$$\eta=1$
\end{tablenotes}
\end{threeparttable}
\end{center}
\end{table}
\par
\vspace{10pt}

\noindent {\bf 5.2. Inferences on Hyper-parameters}

Here we take a closer look at the inferences on the hyper-parameters obtained from an individual data set for Models 2 and 5 from Section 5.1. This gives us some insight into how $\alpha$ and $\eta$ are inferred with changing sample size and sparsity structure. Note that $\small{\mbox{GDP(PM)}}^{2}$ is more restrictive than $\small{\mbox{GDP(PM)}}$ as $\eta$ is fixed, treating only $\alpha$ as unknown. Figure \ref{fig5} gives the marginal posteriors of $\alpha$ and $\eta$ in cases of $\small{\mbox{GDP(PM)}}^{2}$ and $\small{\mbox{GDP(PM)}}$ as described in Section 5.1, while Table \ref{tab2} reports the posterior means for $\alpha$ and $\eta$, as well as model error (ME) performance (as calculated in Section 5.1) on the particular data set used. We clearly observe the adaptive nature and higher flexibility of $\small{\mbox{GDP(PM)}}$ moving from a sparse to a dense model with a big increase, particularly in $\eta$, flattening the prior on $\boldsymbol\beta$. There is not quite as much wiggle room in the case of $\small{\mbox{GDP(PM)}}^{2}$. All it can do is to drive $\alpha$ smaller to allow heavier tails to accommodate a dense structure. As observed in Table \ref{tab1}, however, $\small{\mbox{GDP(PM)}}^{2}$ performs comparably in sparse cases.

\begin{table}[!h]
\caption{Posterior means of the hyper-parameters and the resulting model error.}
\small
\label{tab2}
\begin{center}
\begin{threeparttable}[b]
\begin{tabular}{llrrrr}

& & \multicolumn{2}{c}{$\bf n=50$} & \multicolumn{2}{c}{$\bf n=400$}\\
\hline
& &$\small{\mbox{GDP(PM)}}$  &$\small{\mbox{GDP(PM)}}^{2}$ & $\small{\mbox{GDP(PM)}}$ & $\small{\mbox{GDP(PM)}}^{2}$\\
\hline
{\bf Model 2} & $\alpha$ & $2.464$ & $1.165$ & $0.688$ & $0.870$\\
& $\eta$     & $4.181$ &      --          & $0.614$ & \\
& ME          & $2.443$ & $2.219$   & $0.149$ & $0.181$\\
\hline
{\bf Model 5} & $\alpha$ & $5.262$ & $1.200$ & $9.400$ & $0.560$ \\
& $\eta$    & $9.476$ & --		    & $51.735$ & -- \\
& ME          & $6.290$ & $7.019$ & $0.518$ & $0.614$ \\
\hline
\end{tabular}
\begin{tablenotes}
\footnotesize
\item $^{2}$$\eta=1$
\end{tablenotes}
\end{threeparttable}
\end{center}
\end{table}

\begin{figure}[!h]
\centering \subfigure[]{
\begin{minipage}{.60\linewidth}
 \centering\includegraphics[width=.50\textwidth]{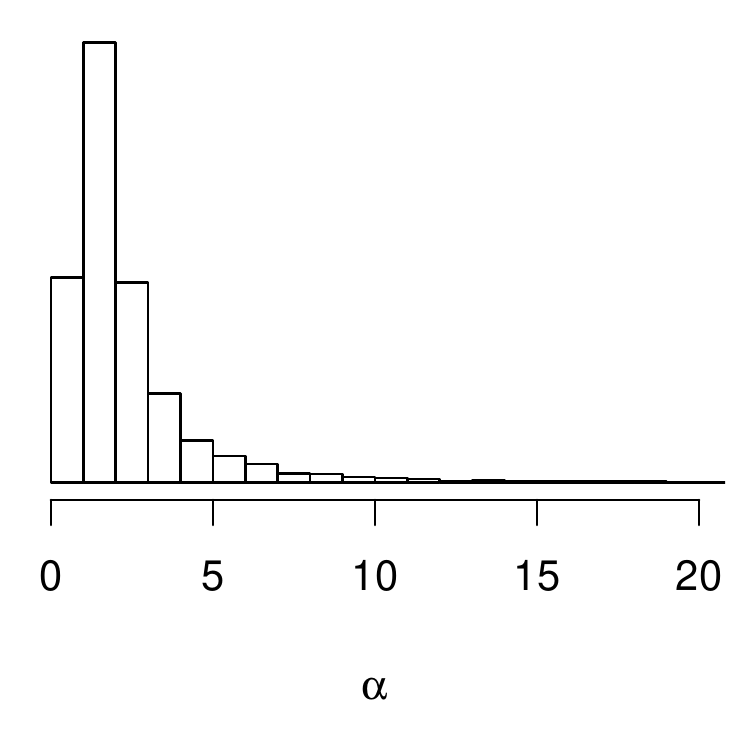}\includegraphics[width=.50\textwidth]{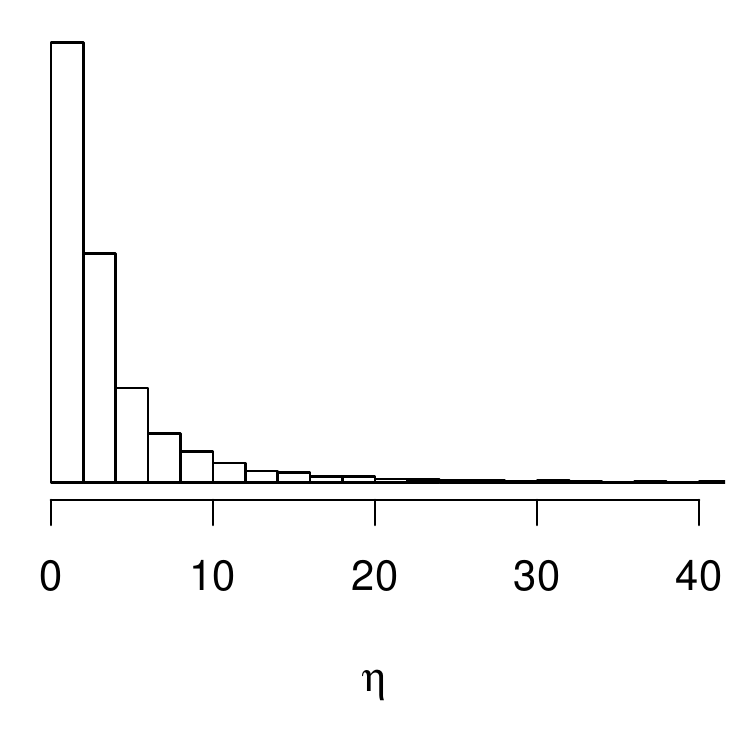}
\end{minipage}}
\centering \subfigure[]{
\begin{minipage}{.30\linewidth}
 \centering\includegraphics[width=1\textwidth]{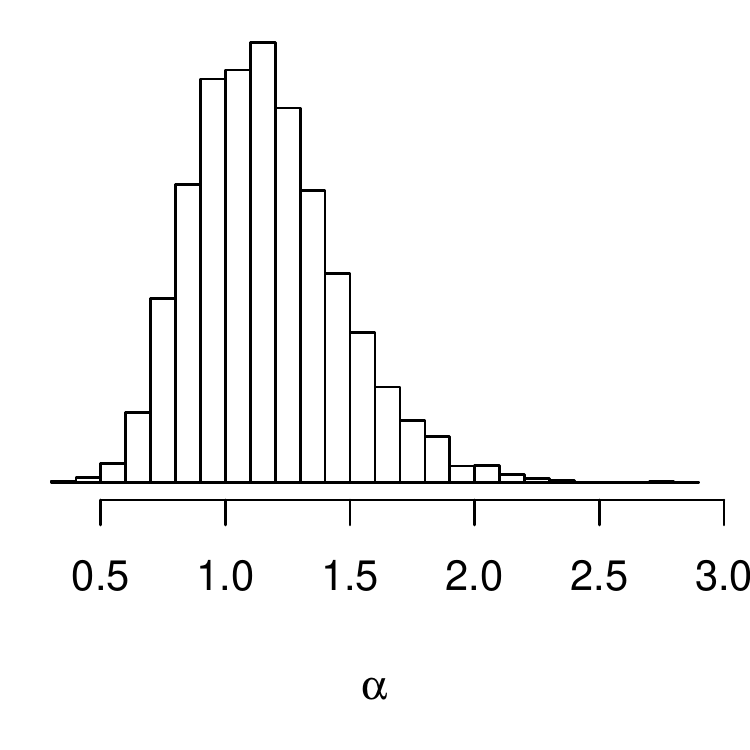}
\end{minipage}}
\centering \subfigure[]{
\begin{minipage}{.60\linewidth}
 \centering\includegraphics[width=.50\textwidth]{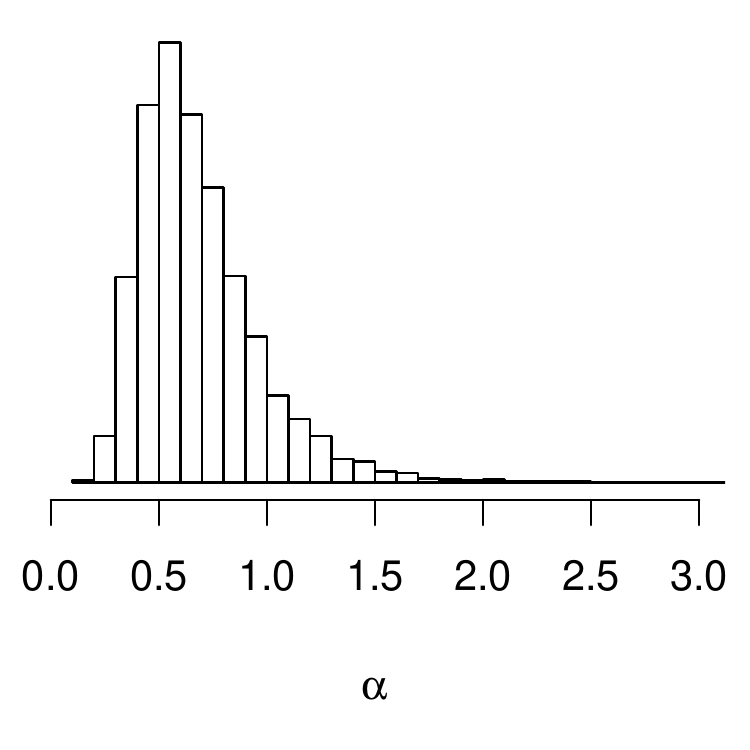}\includegraphics[width=.50\textwidth]{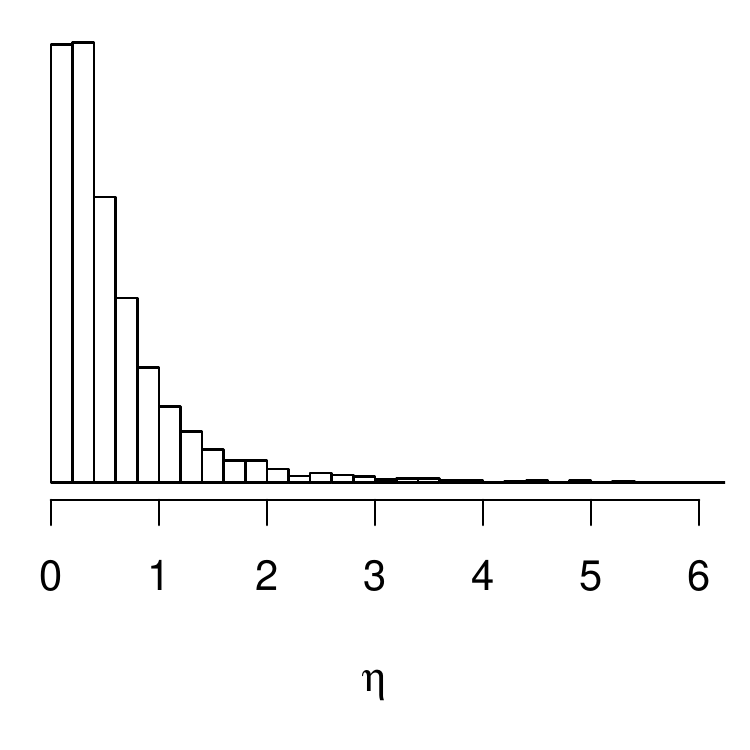}
\end{minipage}}
\centering \subfigure[]{
\begin{minipage}{.30\linewidth}
 \centering\includegraphics[width=1\textwidth]{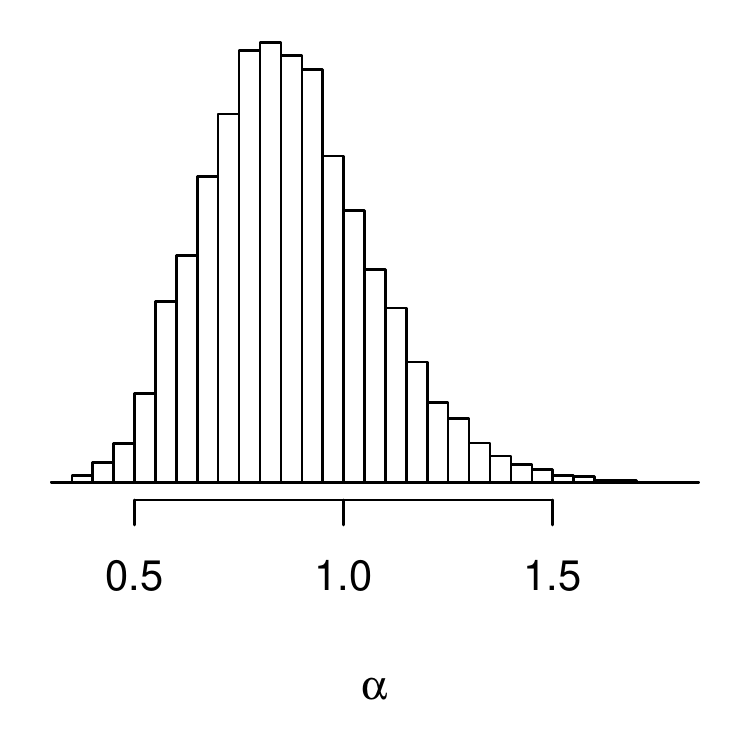}
\end{minipage}}
\centering \subfigure[]{
\begin{minipage}{.60\linewidth}
 \centering\includegraphics[width=.50\textwidth]{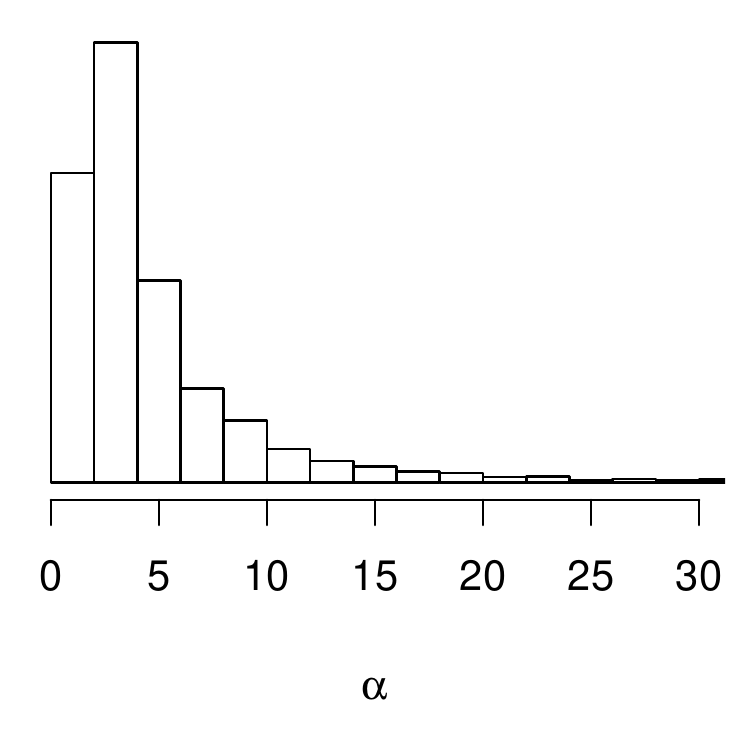}\includegraphics[width=.50\textwidth]{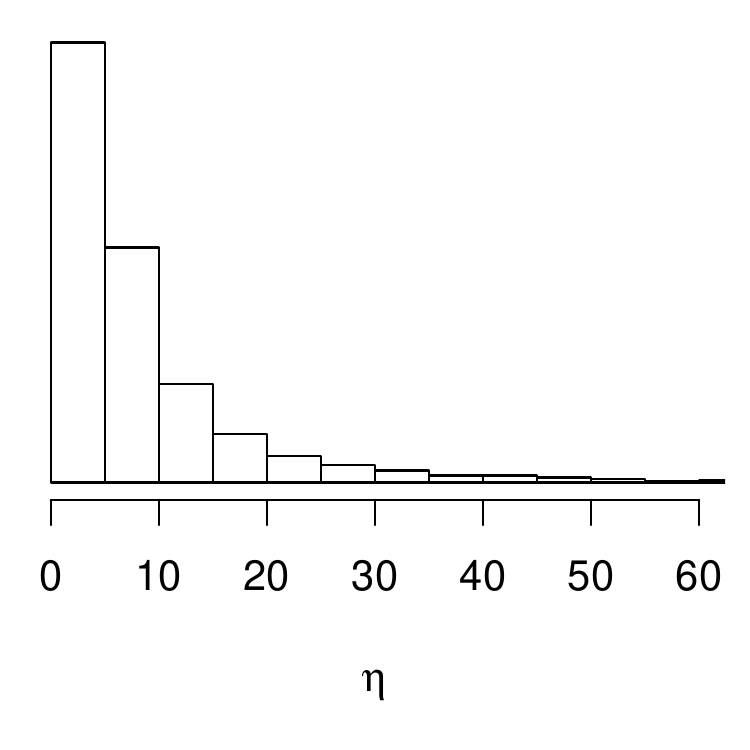}
\end{minipage}}
\centering \subfigure[]{
\begin{minipage}{.30\linewidth}
 \centering\includegraphics[width=1\textwidth]{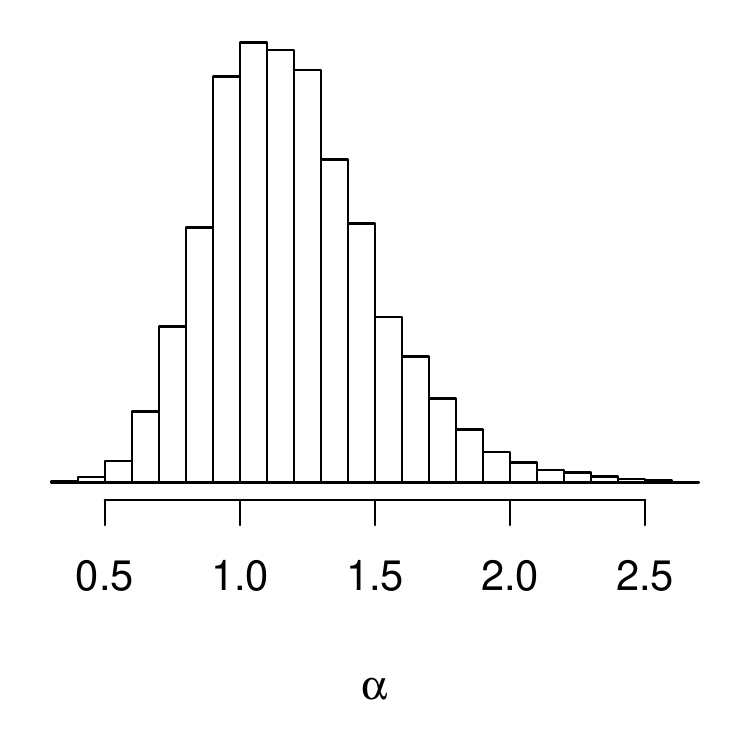}
\end{minipage}}
\centering \subfigure[]{
\begin{minipage}{.60\linewidth}
 \centering\includegraphics[width=.50\textwidth]{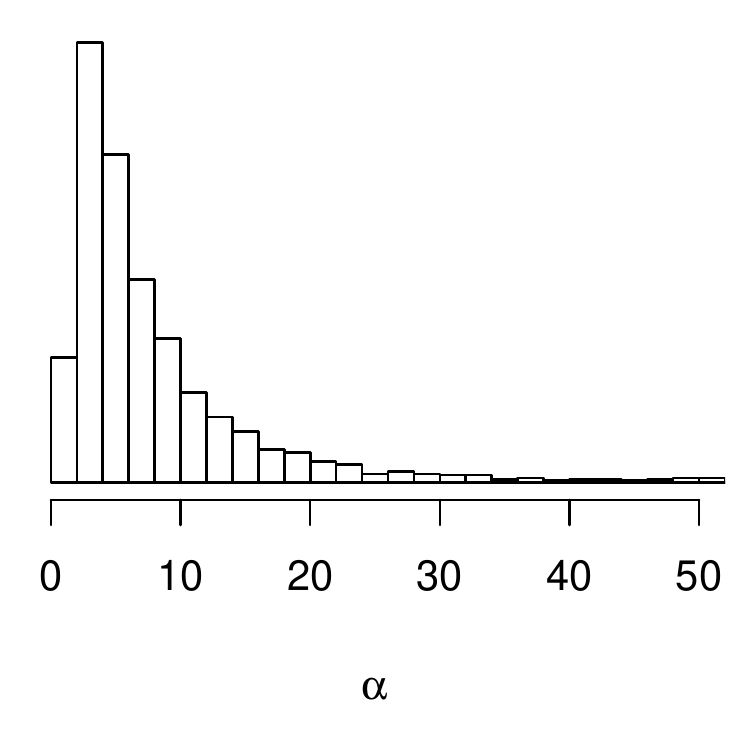}\includegraphics[width=.50\textwidth]{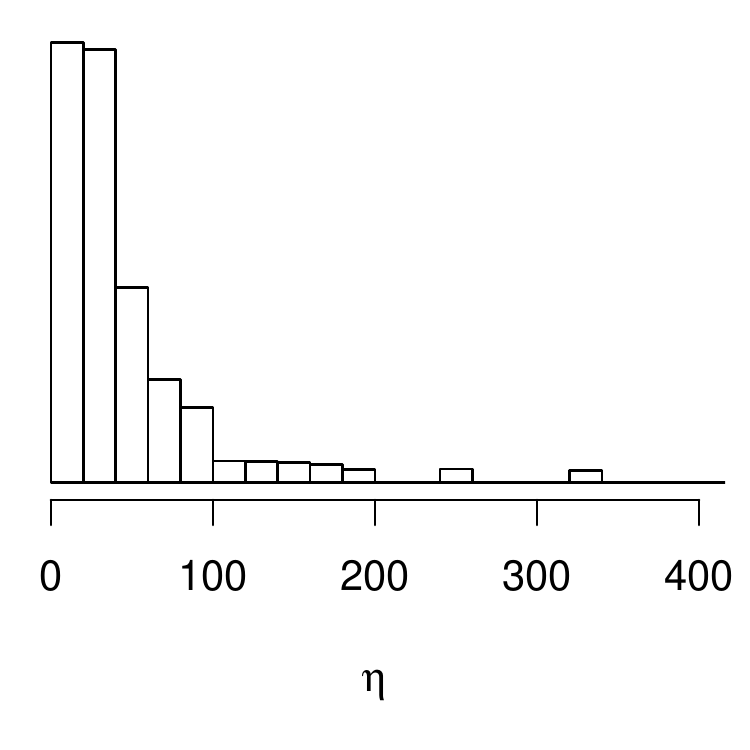}
\end{minipage}}
\centering \subfigure[]{
\begin{minipage}{.30\linewidth}
 \centering\includegraphics[width=1\textwidth]{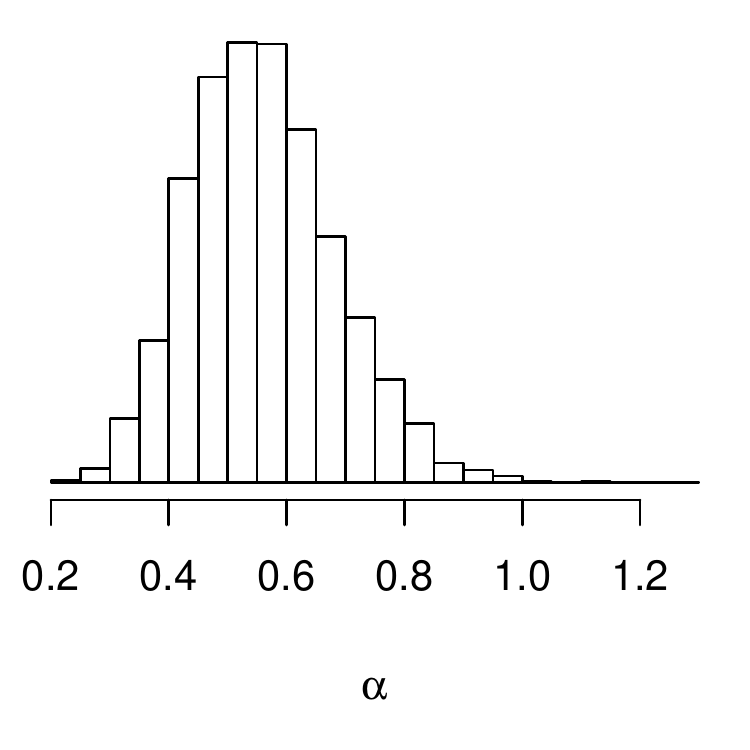}
\end{minipage}}
\caption{Inferences for (a) $\small{\mbox{GDP(PM)}}$ for $n=50$ under Model 2, (b) $\small{\mbox{GDP(PM)}}^{2}$ for $n=50$ under Model 2, (c) $\small{\mbox{GDP(PM)}}$ for $n=400$ under Model 2, (b) $\small{\mbox{GDP(PM)}}^{2}$ for $n=400$ under Model 2, (e) $\small{\mbox{GDP(PM)}}$ for $n=50$ under Model 5, (f) $\small{\mbox{GDP(PM)}}^{2}$ for $n=50$ under Model 5, (g) $\small{\mbox{GDP(PM)}}$ for $n=400$ under Model 5, (h) $\small{\mbox{GDP(PM)}}^{2}$ for $n=400$ under Model 2 .\label{fig5}}
\end{figure}
\par
\vspace{10pt}

\setcounter{chapter}{6}
\setcounter{equation}{0} 
\noindent {\bf 6.  Data Example}

We consider the ozone data analyzed by Breiman and Friedman (1985) and by Casella and Moreno (2006). The original data set contains 13 variables and 366 observations.
The modeled response is the daily maximum one-hour averaged ozone reading in Los Angeles over 330 days in 1976. There are $p=12$ predictors considered and deleting incomplete observations leaves $n=203$ observations.
For validation, the data were split into a training set containing 180 observations and a test set containing
23 observations. We considered models including main effects, quadratic, and two-way interaction terms resulting in $2^{90}$
possible subsets. The complex correlation structure of the data is illustrated in Figure \ref{fig6}.

\begin{figure}[!h]
 \centering\includegraphics[width=.6\textwidth]{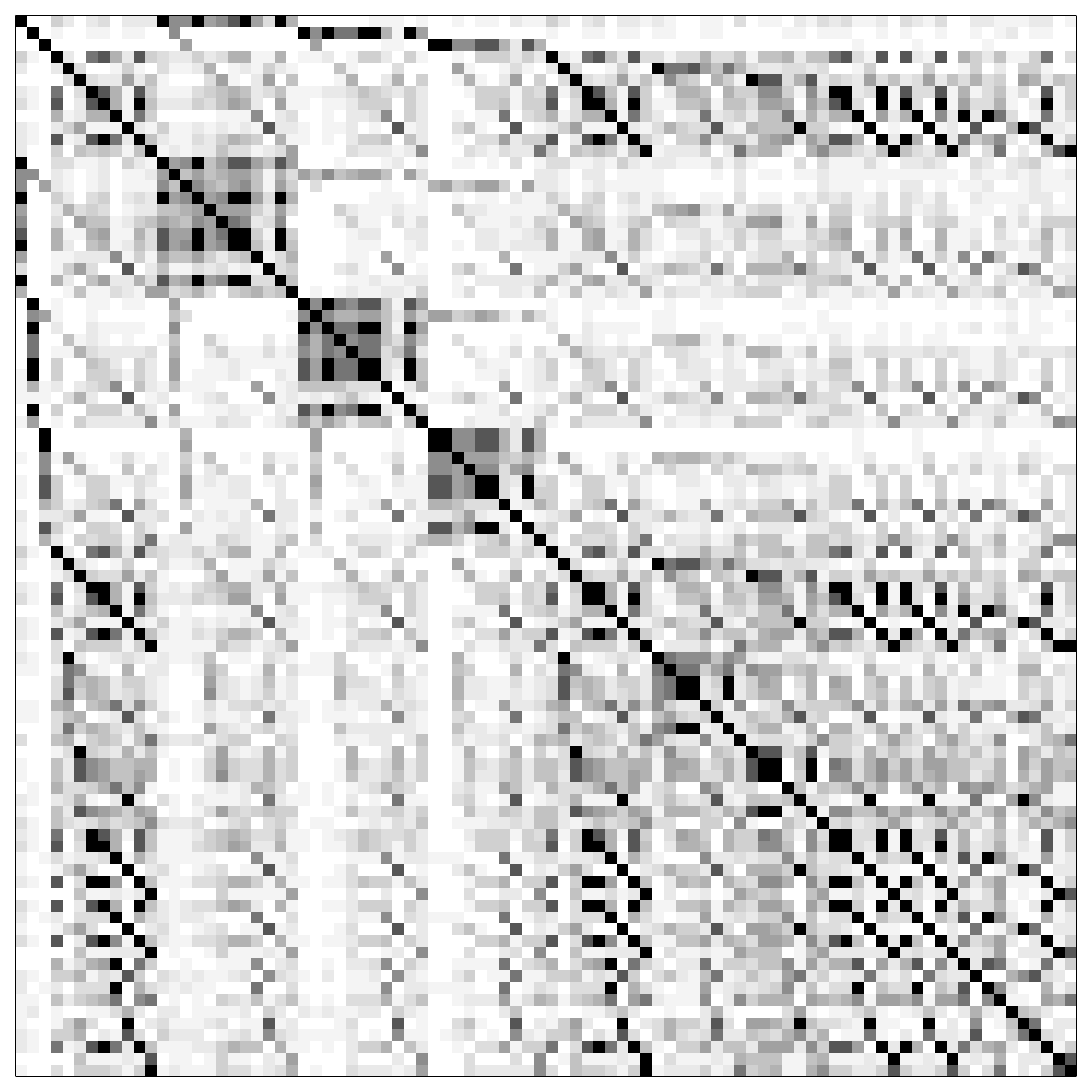}
\caption{The correlation structure of the Ozone data.  \label{fig6}} 
\end{figure}

Figure \ref{fig7} summarizes the performance of the proposed estimators and their competitors. Median values for $\mbox{R}^{2}_{\mbox{test}}$ and the $\pm2$ standard error intervals were obtained by running the methods on $100$ different random training-test splits. Standard errors were computed via bootstrapping the medians $500$ times.

\begin{figure}[!h]
 \centering\includegraphics[width=1\textwidth]{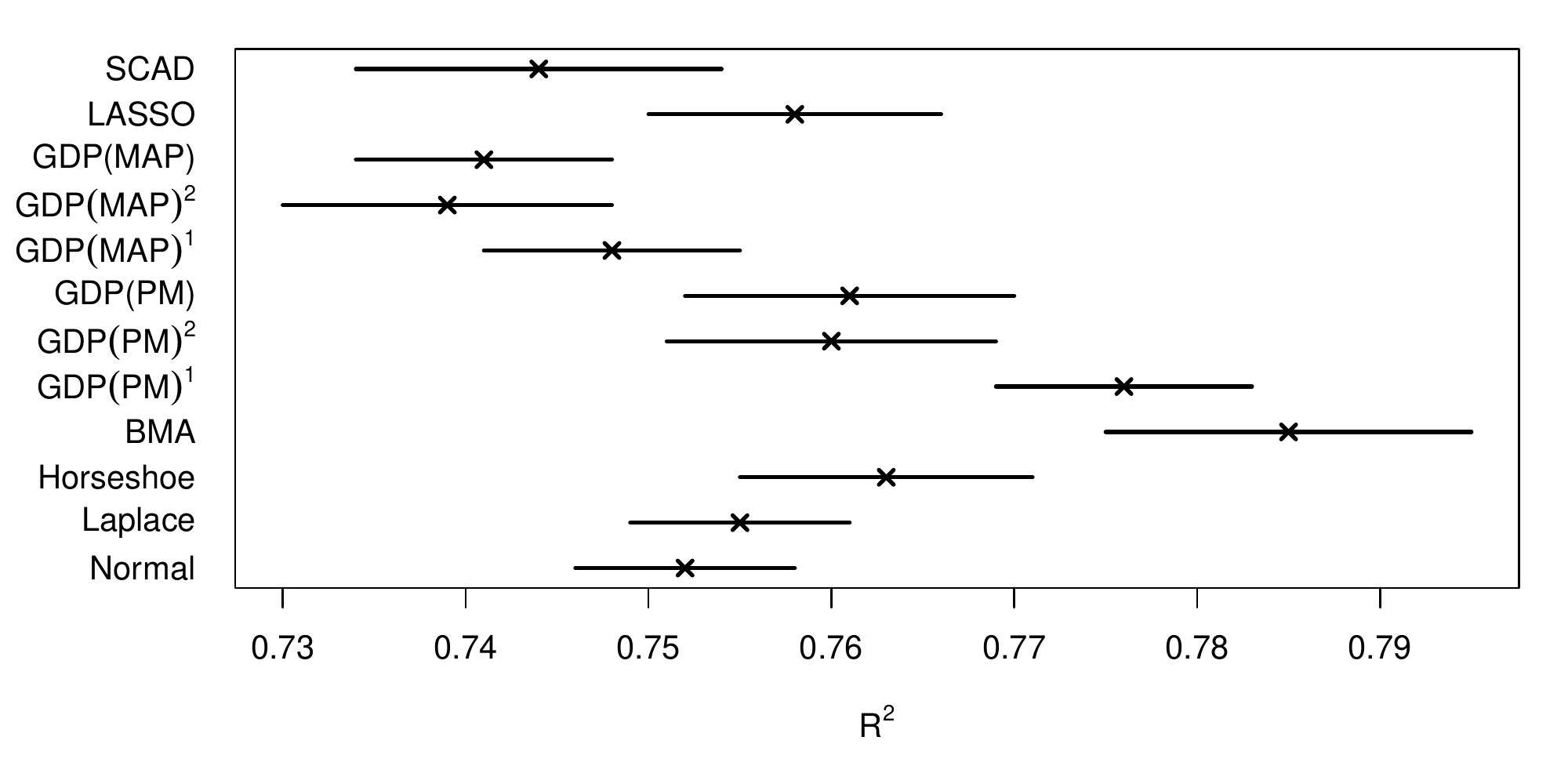}
\caption{Out-of-sample performance comparisons for Ozone data. ($\times$) denotes the median value for $\mbox{R}^{2}_{\mbox{test}}$ while the lines represent the $\pm2$ standard error regions. $^{1}$$\alpha=1$, $\eta=1$; $^{2}$$\eta=1$.  \label{fig7}} 
\end{figure}


The median number of predictors retained in the model by all three $\small{\mbox{GDP(MAP)}}$ estimates was only $4$ while it was $14$ and $9$ for $\small{\mbox{LASSO}}$ and $\small{\mbox{SCAD}}$. Hence $\small{\mbox{GDP(MAP)}}$ promoted much sparser models. In terms of prediction, $\small{\mbox{GDP(PM)}}^1$ yielded the second best results after $\small{\mbox{BMA}}$, with $\small{\mbox{GDP(PM)}}^2$, $\small{\mbox{GDP(PM)}}$, and the horseshoe estimator all having somewhat worse performance.  These shrinkage priors are designed to mimic model averaging behavior, so we expected to obtain results that were competitive with, but not better than, $\small{\mbox{BMA}}$.  The improved performance for $\small{\text{GDP(PM)}}^1$ may be attributed to the use of default hyper-parameter values that were fixed in advance at values thought to produce good performance in sparse settings.  Treating the hyper-parameters as unknown is appealing from the standpoint of flexibility, but in practice the data may not inform sufficiently about their values to outperform a good default choice. $\small{\mbox{GDP(MAP)}}^{1}$ and $\small{\mbox{SCAD}}$ both performed within the standard error range of $\small{\mbox{LASSO}}$, while retaining a smaller number of variables in the model.  As it is important to account for model uncertainty in prediction, the posterior mean estimator under the $\small{\mbox{GDP}}$ prior is appealing in mimicking $\small{\mbox{BMA}}$. In addition, obtaining a simple model containing a relatively small number of predictors is often important, since such models are more likely to be used in fields in which predictive black boxes are not acceptable and practitioners desire interpretable predictive models.
\par
\vspace{10pt}

\newpage

\setcounter{chapter}{7}
\setcounter{equation}{0} 
\noindent {\bf 7. Discussion}

We have proposed a hierarchical prior obtained through a particular scale mixture of normals where the resulting marginal prior has a folded generalized Pareto density thresholded at zero. This prior combines the best of both worlds in that fully Bayes inferences are feasible through its hierarchical representation, providing a measure of uncertainty in estimation, while the resulting marginal prior on the regression coefficients induces a penalty function that allows for the analysis of frequentist properties under maximum a posteriori estimation. The resulting posterior mean estimator can be argued to be mimicking a Bayesian model averaging behavior through mixing over higher level hyper-parameters. Although Bayesian model averaging is appealing, it can be argued that allowing parameters to be arbitrarily close to zero instead of exactly equal to zero may be more natural in some problems. Hence we have a procedure that not only bridges two paradigms -- Bayesian shrinkage estimation and regularization -- but also yields three useful tools: a sparse estimator with good frequentist properties through maximum a posteriori estimation, a posterior mean estimator that mimics a model averaging behavior, and a useful measure of uncertainty around the observed estimates. In addition, the proposed methods have substantial computational advantages in relying on simple block-updated Gibbs sampling, while $\small{\mbox{BMA}}$ requires sampling from a model space with $2^p$ models. Given the simple and fast computation and the excellent performance in small sample simulation studies, the generalized double Pareto should be useful as a shrinkage prior in a broad variety of Bayesian hierarchical models, while also suggesting close relationships with frequentist penalized likelihood approaches. The proposed prior can be used in generalized linear models, shrinkage of basis coefficients in nonparametric regression, and in such settings as factor analysis and nonparametric Bayes modeling. 
\par
\vspace{10pt}

\setcounter{chapter}{8}
\setcounter{equation}{0} 
\noindent {\bf 8. Technical Details}

\begin{proof}[Proof of Theorem 1]
The proof follows along similar lines as does the proof of Theorem 2 in Zou (2006). We first prove asymptotic normality. Let $\boldsymbol\beta=\boldsymbol\beta^{*}+\mathbf{u}/\sqrt{n}$ and 
\begin{equation*}
V_{n}(\mathbf{u})=\left\{\mathbf{y}-\sum_{j=1}^{p}\mathbf{x}_{j}\left(\beta_{j}^{*}+\frac{u_{j}}{\sqrt{n}}\right)\right\}^{2}+\alpha'_{n}\sum_{j=1}^{p}\log\left\{\left|\beta_{j}^{*}+\frac{u_{j}}{\sqrt{n}}\right|+\eta'_{n}\right\}.
\end{equation*} 
Let $\hat{\mathbf{u}}_{n}=\mbox{arg}\min V_{n}(\mathbf{u})$, suggesting $\hat{\mathbf{u}}_{n}=\sqrt{n}(\boldsymbol\beta^{(\infty)}_n-\boldsymbol\beta^{*})$. Now
\begin{equation*}
V_{n}(\mathbf{u})-V_{n}(\mathbf{0})=\mathbf{u}'\left(\frac{1}{n}\mathbf{X}'\mathbf{X}\right)\mathbf{u}-2\frac{\boldsymbol\epsilon'\mathbf{X}}{\sqrt{n}}\mathbf{u}+\alpha'_{n}\sum_{j=1}^{p}\log\frac{\left|\beta_{j}^{*}+u_{j}/\sqrt{n}\right|+\eta'_{n}}{\left|\beta_{j}^{*}\right|+\eta'_{n}},
\end{equation*}
and we know that $\mathbf{X}'\mathbf{X}/n\rightarrow \mathbf{C}$ and $\boldsymbol\epsilon'\mathbf{X}/\sqrt{n}\conind\mathbf{W}\eqd \mbox{N}(\mathbf{0},\sigma^{2}\mathbf{C})$. Consider the limiting behavior of the third term, noting that $\lim_{a\rightarrow \infty}(1+b/a)^{a}=\mbox{e}^b$. If $\beta_{j}^{*}\neq 0$, then $\alpha'_{n}\log\frac{|\beta_{j}^{*}+u_{j}/\sqrt{n}|+\eta'_{n}}{|\beta_{j}^{*}|+\eta'_{n}}\leq \alpha'_{n}\log\frac{|\beta_{j}^{*}|+|u_{j}/\sqrt{n}|+\eta'_{n}}{|\beta_{j}^{*}|+\eta'_{n}}=\alpha'_{n}\log\left(1+\frac{|u_{j}/\sqrt{n}|}{|\beta_{j}^{*}|+\eta'_{n}}\right)\rightarrow 0$. If $\beta_{j}^{*}=0$, then $\alpha'_{n}\log\frac{|u_{j}/\sqrt{n}|+\eta'_{n}}{\eta'_{n}}=\alpha'_{n}\log\left(1+\frac{|u_{j}/\sqrt{n}|}{\eta'_{n}}\right)$ which is $0$ if $u_{j}=0$, and diverges otherwise. By Slutsky's Theorem
\begin{equation*}
V_{n}(\mathbf{u})-V_{n}(\mathbf{0}) \conind \left\{\begin{array}{ll}\mathbf{u}'_{\mathcal{A}}\mathbf{C}_{\mathcal{A}}\mathbf{u}_{\mathcal{A}}-2\mathbf{u}'_{\mathcal{A}}\mathbf{W}_{\mathcal{A}} & \mbox{if}\ \ u_{j}=0 \ \ \forall j \notin \mathcal{A} \\
																 \infty & \mbox{otherwise}. 	\end{array}\right.
\end{equation*}
$V_{n}(\mathbf{u})-V_{n}(\mathbf{0})$ is convex and the unique minimum of the right hand side is $(\mathbf{C}_{\mathcal{A}}^{-1}\mathbf{W}_{\mathcal{A}},\mathbf{0})'$. By epiconvergence (Geyer (1994), Knight and Fu (2000)),
\begin{equation}
\hat{\mathbf{u}}_{n\mathcal{A}}\conind \mathbf{C}_{\mathcal{A}}^{-1}\mathbf{W}_{\mathcal{A}}, \ \ \ \
\hat{\mathbf{u}}_{n\mathcal{A}^{c}}\conind \mathbf{0}.
\label{asymp_norm}
\end{equation}
Since $\mathbf{W}_{\mathcal{A}}\eqd \mathrm{N}(\mathbf{0},\sigma^{2}\mathbf{C}_{\mathcal{A}})$, this proves asymptotic normality. 

Now $\forall j \in \mathcal{A}$, $\beta_{nj}^{(\infty)}\coninp \beta_{j}^{*}$; thus $\mathbb{P}(j\in \mathcal{A}_{n})\rightarrow 1$. Hence for consistency, it is sufficient to show that $\forall j' \notin \mathcal{A}$, $\mathbb{P}(j' \in \mathcal{A}_{n})\rightarrow 0$. Consider the event $j' \in \mathcal{A}_{n}$. By the KKT optimality conditions, $2\mathbf{x}_{j'}'(\mathbf{y}-\mathbf{X}\boldsymbol\beta_n^{(\infty)})=\frac{\alpha'_{n}}{\eta'_{n}+|\beta_{nj'}^{(\infty)}|}$. Noting that $\sqrt{n}\beta_{nj'}^{(\infty)}\coninp 0$ by (\ref{asymp_norm}), $\frac{\alpha'_{n}}{\sqrt{n}\eta'_{n}+\sqrt{n}|\beta_{nj'}^{(\infty)}|}\rightarrow \infty$,  while
\begin{equation*}
\frac{2\mathbf{x}_{j'}'(\mathbf{y}-\mathbf{X}\boldsymbol\beta_n^{(\infty)})}{\sqrt{n}}=2\left\{\frac{\mathbf{x}_{j'}'\mathbf{X}\sqrt{n}(\boldsymbol\beta^{*}-\boldsymbol\beta_n^{(\infty)})}{n}+\frac{\mathbf{x}_{j'}'\boldsymbol\epsilon}{\sqrt{n}}\right\}. 
\end{equation*}
By (\ref{asymp_norm}) and Slutsky's Theorem, we know that both terms in the brackets converge in distribution to some normal, so
\begin{equation*}
\mathbb{P}\left(j'\in \mathcal{A}_{n}\right)\leq \mathbb{P}\left\{2\mathbf{x}_{j'}'\left(\mathbf{y}-\mathbf{X}\boldsymbol\beta_n^{(\infty)}\right)=\frac{\alpha_{n}'}{|\beta^{(\infty)}_{nj'}|+\eta'_{n}} \right\} \rightarrow 0.
\end{equation*}
This concludes the proof.
\end{proof}

\begin{proof}[Proof of Theorem 2]
We modify the proof of Theorem 2 in Zou (2006). Here $\boldsymbol\beta^{(0)}_{n}$ denotes the least squares estimator. We first prove asymptotic normality. Let $\boldsymbol\beta=\boldsymbol\beta_n^{*}+\mathbf{u}/\sqrt{n}$ and 
\begin{equation*}
V_{n}(\mathbf{u})=\left\{\mathbf\mathbf{y}-\sum_{j=1}^{p}{x}_{j}\left(\beta_{nj}^{*}+\frac{u_{j}}{\sqrt{n}}\right)\right\}^{2}+\alpha^{\dag}_{n}\sum_{j=1}^{p}|\beta_{nj}^{*}+\frac{u_{j}}{\sqrt{n}}|\left(|\beta_{nj}^{(0)}|+\eta^{\dag}_{n}\right)^{-1}.
\end{equation*} 
Let $\hat{\mathbf{u}}_n=\mbox{arg}\min V_{n}(\mathbf{u})$, suggesting $\hat{\mathbf{u}}_{n}=\sqrt{n}(\boldsymbol\beta_n^{(1)}-\boldsymbol\beta_n^{*})$. Now
\begin{eqnarray*}
V_{n}(\mathbf{u})-V_{n}(\mathbf{0})&=&\mathbf{u}'\left(\frac{1}{n}\mathbf{X}'\mathbf{X}\right)\mathbf{u}-2\frac{\boldsymbol\epsilon'\mathbf{X}}{\sqrt{n}}\mathbf{u}\nonumber\\
& &+\frac{\alpha^{\dag}_{n}}{\sqrt{n}}\sum_{j=1}^{p}\left(|\beta_{nj}^{(0)}|+\eta^{\dag}_{n}\right)^{-1}\sqrt{n}\left(\left|\beta_{nj}^{*}+\frac{u_{j}}{\sqrt{n}}\right|-|\beta_{nj}^{*}|\right),
\end{eqnarray*}
and we know that $\mathbf{X}'\mathbf{X}/n\rightarrow \mathbf{C}$ and $\boldsymbol\epsilon'\mathbf{X}/\sqrt{n}\conind\mathbf{W}\eqd \mathrm{N}({0},\sigma^{2}\mathbf{C})$. Consider the limiting behavior of the third term. If $\beta_{nj}^{*}\neq 0$ then, by the Continuous Mapping Theorem, $\{|\beta_{nj}^{(0)}|+\eta^{\dag}_{n}\}^{-1}\coninp \{|\beta_{nj}^{*}|+\eta^{\dag}_{n}\}^{-1}$ and $\sqrt{n}(|\beta_{nj}^{*}+u_{j}/\sqrt{n}|-|\beta_{nj}^{*}|)\rightarrow u_{j}\mbox{sgn}(\beta_{nj}^{*})$. By Slutsky's Theorem, $(\alpha^{\dag}_{n}/\sqrt{n})\{|\beta_{nj}^{(0)}|+\eta^{\dag}_{n}\}^{-1}\sqrt{n}(|\beta_{nj}^{*}+u_{j}/\sqrt{n}|-|\beta_{nj}^{*}|)\coninp 0$. If $\beta_{nj}^{*}=0$, then $\sqrt{n}(|\beta_{nj}^{*}+u_{j}/\sqrt{n}|-|\beta_{nj}^{*}|)=|u_{j}|$ and $\alpha^{\dag}_{n}\{|\beta_{nj}^{(0)}|+\eta^{\dag}_{n}\}^{-1}/\sqrt{n}=\alpha^{\dag}_{n}/(\sqrt{n}|\beta_{nj}^{(0)}|+\sqrt{n}\eta^{\dag}_{n})$, where $\sqrt{n}\beta_{nj}^{(0)}=O_{p}(1)$. Again by Slutsky's Theorem,
\begin{equation*}
V_{n}(\mathbf{u})-V_{n}({0}) \conind \left\{\begin{array}{ll}\mathbf{u}'_{\mathcal{A}}\mathbf{C}_{\mathcal{A}}\mathbf{u}_{\mathcal{A}}-2\mathbf{u}'_{\mathcal{A}}\mathbf{W}_{\mathcal{A}} & \mbox{if}\ \ u_{j}=0 \ \ \mbox{for all} \ \ j \notin \mathcal{A} \\
																 \infty & \mbox{otherwise}. 	\end{array}\right.
\end{equation*}
$V_{n}(\mathbf{u})-V_{n}(\mathbf{0})$ is convex and the unique minimum of the right hand side is $(\mathbf{C}_{\mathcal{A}}^{-1}\mathbf{W}_{\mathcal{A}},\mathbf{0})'.$ By epiconvergence (Geyer (1994), Knight and Fu (2000)),
\begin{equation}
\hat{\mathbf{u}}_{n\mathcal{A}}\conind \mathbf{C}_{\mathcal{A}}^{-1}\mathbf{W}_{\mathcal{A}}, \ \ \ \ \hat{\mathbf{u}}_{n\mathcal{A}^{c}}\conind \mathbf{0}. 
\label{asymp_norm2}
\end{equation}
Since $\mathbf{W}_{\mathcal{A}}\eqd \mathrm{N}(\mathbf{0},\sigma^{2}\mathbf{C}_{\mathcal{A}})$, this proves the asymptotic normality. 

Now $\forall j \in \mathcal{A}$, $\beta_{nj}^{(1)}\coninp \beta_{nj}^{*}$; thus $\mathbb{P}(j\in \mathcal{A}_{n})\rightarrow 1$. We show that for all $j' \notin \mathcal{A}$, $\mathbb{P}(j' \in \mathcal{A}_{n})\rightarrow 0$. Consider the event $j' \in \mathcal{A}_{n}$. By the KKT optimality conditions, $2\mathbf{x}_{j'}'(\mathbf{y}-\mathbf{X}\boldsymbol\beta_{n}^{(1)})=\alpha_{n}^{\dag}(|\beta_{nj'}^{(0)}|+\eta_{n}^{\dag})^{-1}$. We know that $\alpha^{\dag}_{n}(|\beta_{nj'}^{(0)}|+\eta^{\dag}_{n})^{-1}/\sqrt{n}\coninp \infty$, while
\begin{equation*}
\frac{2\mathbf{x}_{j'}'(\mathbf{y}-\mathbf{X}\boldsymbol\beta_{n}^{(1)})}{\sqrt{n}}=2\left\{\frac{\mathbf{x}_{j'}'\mathbf{X}\sqrt{n}(\boldsymbol\beta_n^{*}-\boldsymbol\beta_{n}^{(1)})}{n}+\frac{\mathbf{x}_{j'}'\boldsymbol\epsilon}{\sqrt{n}}\right\}. 
\end{equation*}
By (\ref{asymp_norm2}) and Slutsky's Theorem, we know that both terms in the brackets converge in distribution to some normal, so
\begin{equation*}
\mathbb{P}\left(j'\in \mathcal{A}_{n}\right)\leq \mathbb{P}\left\{2\mathbf{x}_{j'}'\left(\mathbf{y}-\mathbf{X}\boldsymbol\beta_{n}^{(1)}\right)=\frac{\alpha^{\dag}_{n}}{|\beta_{nj'}^{(0)}|+\eta^{\dag}_{n}}\right\}\rightarrow 0,
\end{equation*}
which proves consistency.
\end{proof}


\noindent {\bf Acknowledgments}

This work was supported by Award Number R01ES017436 from the National Institute of Environmental Health Sciences. The content is solely the responsibility of the authors and does not necessarily represent the official views of the National Institute of Environmental Health Sciences or the National Institutes of Health. Jaeyong Lee was supported by Basic Science Research Program through the National Research Foundation of Korea (NRF) funded by the Ministry of Education, Science and Technology (20110027353).
\par
\vspace{10pt}

\noindent{\bf References}
\begin{description}
\item
Berger, J. (1980). A robust generalized Bayes estimator and confidence region
  for a multivariate normal mean.
  {\it The Annals of Statistics} {\bf 8}, 716--761.

\item
Berger, J. (1985). {\em Statistical Decision Theory and Bayesian Analysis}. Springer, New York.

\item
Breiman, L. (1996). Heuristics of instability and stabilization in model selection. {\it The Annals of Statistics} {\bf 24}, 2350--2383.

\item
Breiman, L. and Friedman, J. H. (1985), Estimating optimal transformations for multiple regression and correlation. {\it Journal of the American Statistical Association} {\bf 80}.

\item
Candes, E.~J., Wakin, M.~B., and Boyd, S.~P. (2008). Enhancing sparsity by reweighted $\ell_1$ minimization. {\it Journal of Fourier Analysis and Applications} {\bf 14}, 877--905.

\item
Carvalho, C., Polson, N., and Scott, J. (2009). Handling sparsity via the horseshoe. {\it JMLR: W\&CP} {\bf 5}.

\item
Carvalho, C., Polson, N., and Scott, J. (2010). The horseshoe estimator for sparse signals. {\it Biometrika} {\bf 97}, 465--480.

\item
Casella, G. and Moreno, E. (2006). Objective Bayesian variable selection. {\it Journal of the American Statistical Association} {\bf 101}.

\item
Cevher, V. (2009). Learning with compressible priors. {\it Advances in Neural Information Processing Systems} {\bf 22}.

\item
Clyde, M., Ghosh, J., and Littman, M.~L. (2010). Bayesian adaptive sampling for variable selection and model averaging. {\it Journal of Computational and Graphical Statistics} {\bf 20}, 80--101.

\item
Clyde, M. and Littman, M. (2005). {\it Bayesian model averaging using Bayesian adaptive sampling -- BAS package manual}.

\item 
Efron, B., Hastie, T., Johnstone, I., and Tibshirani, R. (2004). Least angle regression. {\em The Annals of Statistics\/} {\bf 32}, 407--499.

\item
Fan, J., Feng, Y., Samworth, R., and Wu, Y. (2010). {\it SIS package manual}.

\item
Fan, J. and Li, R. (2001). Variable selection via nonconcave penalized likelihood and its oracle properties. {\it Journal of the American Statistical Association} {\bf 96}, 1348--1360.

\item
Figueiredo, M. A.~T. (2003). Adaptive sparseness for supervised learning. {\it IEEE Transactions on Pattern Analysis and Machine Intelligence} {\bf 25}, 1150--1159.

\item
Fu, W. (1998). Penalized regressions: The bridge versus the lasso. {\it Journal of Computational and Graphical Statistics} {\bf 7}, 397--416.

\item
Garrigues, P.~J. (2009). Sparse coding models of natural images: Algorithms for efficient inference and learning of higher-order structure. PhD Thesis, University of California, Berkeley.

\item
Geyer, C.~J. (1994). On the asymptotics of constrained M-estimation. {\it The Annals of Statistics} {\bf 22}, 1993--2010.

\item
Gramacy, R.~B. (2010). {\it Estimation for multivariate normal and Student-t data with monotone missingness -- Monomvn package manual}.

\item
Griffin, J.~E. and Brown, P.~J. (2007). Bayesian adaptive lassos with non-convex penalization. Technical Report.

\item
Griffin, J.~E. and Brown, P.~J. (2010). Inference with normal-gamma prior distributions in regression problems. {\it Bayesian Analysis} {\bf 5}, 171--188.

\item
Hans, C. (2009). Bayesian lasso regression. {\it Biometrika} {\bf 96}, 835--845.

\item
Hastie, T. and Efron, B. (2011). {\it Lars package manual}.

\item
Knight, K. and Fu, W. (2000). Asymptotics for lasso-type estimators. {\it The Annals of Statistics} {\bf 28}, 1356--1378.

\item
Lee, A., Caron, F., Doucet, A., and Holmes, C. (2012). Bayesian sparsity-path-analysis of genetic association signal using generalized t priors. {\it Statistical Applications in Genetics
and Molecular Biology} {\bf 11}.

\item
Liang, F., Paulo, R., Molina, G., Clyde, M., and Berger, J. (2008). Mixtures of \textit{g} priors for Bayesian variable selection. {\it Journal of the American Statistical Association} {\bf 103}, 410--423.

\item
McDonald, J.~B. and Newey, W.~K. (1988). Partially adaptive estimation of regression models via the generalized t distribution. {\it Econometric Theory} {\bf 4}, 428--457.

\item
Park, T. and Casella, G. (2008). The {B}ayesian lasso. {\it Journal of the American Statistical Association} {\bf 103}, 681--686.

\item
Pickands, J. (1975). Statistical inference using extreme order statistics. {\it The Annals of Statistics} {\bf 3}, 119--131.

\item
Polson, N.~G. and Scott, J.~G. (2010). Shrink globally, act locally: Sparse Bayesian regularization and prediction. {\it Bayesian Statistics 9}. Oxford University Press.

\item
Ritter, C. and Tanner, M.~A. (1992). Facilitating the Gibbs sampler: The Gibbs stopper and the griddy-Gibbs sampler. {\it Journal of the American Statistical Association} {\bf 97}, 861--868.

\item
Strawderman, W.~E. (1971). Proper Bayes minimax estimators of the multivariate normal mean. {\it The Annals of Mathematical Statistics} {\bf 42}, 385--388.

\item
Tibshirani, R. (1996). Regression shrinkage and selection via the lasso. {\it Journal of the Royal Statistical Society } {\bf 58}, 267--288.

\item
Tipping, M.~E. (2001). Sparse Bayesian learning and the relevance vector machine. {\it Journal of Machine Learning Research} {\bf 1}.

\item
Wang, H., Li, R., and Tsai, C.~L. (2007). Tuning parameter selectors for the smoothly clipped absolute deviation method. {\it Biometrika} {\bf 94}, 553--568.

\item
West, M. (1987). On scale mixtures of normal distributions. {\it Biometrika} {\bf 74}, 646--648.

\item
Yuan, M. and Lin, Y. (2005). Efficient empirical Bayes variable selection and estimation in linear models. {\it Journal of the American Statistical Association} {\bf 100}, 1215--1225.

\item
Zhao, P. and Yu, B. (2006). On model selection consistency of lasso. {\it Journal of Machine Learning Research} {\bf 7}.

\item
Zou, H. (2006). The adaptive lasso and its oracle properties. {\it Journal of the American Statistical Association\/} {\bf 101}, 1418--1429.

\item
Zou, H. and Li, R. (2008). One-step sparse estimates in nonconcave penalized likelihood models. {\it The Annals of Statistics} {\bf 36}, 1509--1533.

\end{description}


\vskip .65cm
\noindent
SAS Institute Inc., Durham, NC 27513, USA
\vskip 2pt
\noindent
E-mail: artin.armagan@sas.com
\vskip 2pt

\noindent
Department of Statistical Science, Duke University, Durham, NC 27708, USA
\vskip 2pt
\noindent
E-mail: dunson@stat.duke.edu
\vskip 2pt

\noindent
Department of Statistics, Seoul National University, Seoul, 151-747, Korea
\vskip 2pt
\noindent
E-mail: leejyc@gmail.com
\vskip .3cm

\end{document}